\newtheorem{lemma}{Lemma}
\newtheorem{theorem}{Theorem}
  \providecommand\BibTeX{{%
    \normalfont B\kern-0.5em{\scshape i\kern-0.25em b}\kern-0.8em\TeX}}}
\begin{document}

\title{Massively Parallel Single-Source SimRanks in $o(\log n)$ Rounds}

\author{Siqiang Luo}
\authornote{The authors are ordered alphabetically.} 
\affiliation{%
  \institution{Nanyang Technological University}
  \country{Singapore}
}
\email{siqiang.luo@ntu.edu.sg}

\author{Zulun Zhu\footnotemark[1]}
\affiliation{%
  \institution{Nanyang Technological University}
  \country{Singapore}
}
\email{ZULUN001@e.ntu.edu.sg}



\begin{abstract}
SimRank is one of the most fundamental measures that evaluate the structural similarity between two nodes in a graph and has been applied in a plethora of data management tasks. These tasks often involve single-source SimRank computation that evaluates the SimRank values between a source node $s$ and all other nodes. Due to its high computation complexity, single-source SimRank computation for large graphs is notoriously challenging, and hence recent studies resort to distributed processing. To our surprise, although SimRank has been widely adopted for two decades, theoretical aspects of distributed SimRanks with provable results have rarely been studied.

In this paper, we conduct a theoretical study on single-source SimRank computation in the Massive Parallel Computation (MPC) model, which is the standard theoretical framework modeling distributed systems such as MapReduce, Hadoop, or Spark. Existing distributed SimRank algorithms enforce either $\Omega(\log n)$ communication round complexity or $\Omega(n)$ machine space for a graph of $n$ nodes. We overcome this barrier. 
Particularly, given a graph of $n$ nodes, for any query node $v$ and constant error $\epsilon>\frac{3}{n}$, we show that using $O(\log^2 \log n)$ rounds of communication among machines is almost enough to compute single-source SimRank values with at most $\epsilon$ absolute errors, while each machine only needs a space sub-linear to $n$. To the best of our knowledge, this is the first single-source SimRank algorithm in MPC that can overcome the $\Theta(\log n)$ round complexity barrier with provable result accuracy. 
\end{abstract}

\begin{CCSXML}
<ccs2012>
 <concept>
  <concept_id>10010520.10010553.10010562</concept_id>
  <concept_desc>Theory of computation~Massively parallel algorithms, Distributed algorithms</concept_desc>
  <concept_significance>500</concept_significance>
 </concept>
</ccs2012>
\end{CCSXML}

\ccsdesc[500]{Theory of computation~Massively parallel algorithms, Distributed algorithms}

\keywords{SimRank, distributed computing, communication rounds}



\maketitle

\section{Introduction}
\label{sec:intro}
Evaluating the structural similarity between two nodes in a graph is fundamental in plenty of data management and data mining tasks. Examples include recommendation systems~\cite{nguyen2015evaluation,mo2021agenda}, avoiding customer churn~\cite{luo2019efficient}, spam detection~\cite{benczur2006link}, link prediction~\cite{liben2007link} and graph mining~\cite{chen2020scalable,DBLP:conf/kdd/JinLH11,DBLP:journals/pvldb/LiaoMLLY22,bojchevski2020scaling}. Among many similarity measures, SimRank~\cite{DBLP:conf/kdd/JehW02} is one of the most widely adopted measures over graphs. SimRank is defined based on the intuition that two nodes are similar only when their neighboring nodes are similar. Formally, it uses the following recursive equation to compute the SimRank between two nodes $u$ and $v$, where $c$ is a positive constant factor and $I(u)$ denotes the in-neighbor set of node $u$.
\begin{equation}\label{basic equation}
\small
  s(u, v) =
  \begin{cases}
   1, & u = v \\
   {\frac{c}{|\mathcal{I}(u)||\mathcal{I}(v)|}} \sum_{u'\in \mathcal{I}(u)}\sum_{v' \in \mathcal{I}(v)} s(u', v'), & u\neq v
  \end{cases}
\end{equation}

\noindent
Since it was proposed by
Jeh and Widom~\cite{DBLP:conf/kdd/JehW02}, SimRank gains increasing popularity in various application domains, e.g., social analysis~\cite{DBLP:journals/pvldb/ZhengZF0Z13}, collaborative filtering~\cite{DBLP:journals/pvldb/AntonellisGC08}, and nearest neighbor
search~\cite{DBLP:conf/icde/LeeLY12}.

\subsection{Distributed SimRank Computation} 
We focus on {\it single-source SimRank computation}, whose importance has been uncovered in a plethora of recent studies~\cite{DBLP:journals/pvldb/0012XFC00M20, DBLP:journals/pvldb/LiFLCCL15,wang2020exact,shi2020realtime, wang2021exactsim, DBLP:conf/kdd/MaeharaKK14,tian2016sling,kusumoto2014scalable}. Given a graph $G$, a single-source SimRank for source node $s$ evaluates the SimRank values between $s$ and all other nodes in the graph. Single-source SimRank computation is widely used in applications where a ranking of the other objects with respect to an object is required. For example, it can be applied in search engines to locate the most similar web pages to a given one \cite{fogaras2005scaling}, or in social network services to recommend new friends to a given user \cite{DBLP:conf/kdd/HeFLC10,nguyen2015evaluation}, or act as a ranking measurement to cluster objects \cite{cai2008s}. 

Given a graph of $n$ nodes, computing single-source SimRank is challenging for large graphs because it inherently involves $O(n)$ times of pairwise SimRank evaluations, each of which can already be too costly. Particularly, following the recursive form in Equation~\ref{basic equation}, computing $s(u,v)$ requires accessing many pairs of nodes in the graph, leading to $O(n^2)$ complexity. To address the efficiency issue, recent works~\cite{jiang2017reads,zhang2017experimental,shao2015efficient} employ a random-walk-based approach to approximate the SimRank values. The main idea is to translate the SimRank computation into estimating the meeting probability of two decay-based random walks from the two source nodes (See Section~\ref{sec:pre random} for more details). The computational complexity is then dependent on the number of random walks sampled, achieving a significant speed-up. However, even with the improved approach, it is still challenging to compute single-source SimRanks when $n$ is large.

Therefore, it is increasingly popular to apply {\it distributed computation}~\cite{DBLP:journals/pvldb/0012XFC00M20, DBLP:journals/pvldb/LiFLCCL15,DBLP:journals/tkde/SongLGZWY18} to SimRank computation, which involves multiple machines to compute SimRank values in a collaborative manner, ultimately scaling up the computation to large graphs. 
%
%
%
Existing distributed SimRank algorithms~(e.g.,~\cite{DBLP:journals/pvldb/LiFLCCL15,DBLP:journals/tkde/SongLGZWY18,DBLP:journals/pvldb/0012XFC00M20}) mostly focus on empirical evaluation, 
and non-trivial theoretical analysis with provable approximation result guarantees is rarely given. In this paper, we aim to conduct a non-trivial analysis on computing single-source SimRank values in a distributed setting. We focus on approximate SimRank algorithms that output SimRank values with at most $\epsilon$ absolute errors, where $\epsilon\in (0,1)$ is a given error threshold. We study the topic based on a well-known distributed computation model named MPC (Massively Parallel Computation)~\cite{karloff2010model,andoni2014parallel,beame2017communication,goodrich2011sorting}, which has been widely adopted for theoretical analysis for distributed algorithms and applied to analyzing various data mining and graph processing tasks~\cite{DBLP:conf/podc/Behnezhad0DFHKU19,DBLP:journals/talg/CzumajDP21,DBLP:conf/podc/GhaffariGKMR18,DBLP:conf/focs/BehnezhadDELM19,nowicki2021dynamic}. MPC considers a set of machines, each of which can afford a space of $S$ words. The number of machines $M$ is set to be some integer in $\Tilde{\Theta}(F/S)$ to allow the distributed system to hold the input, where $F$ denotes the size of the input and $\Tilde{\Theta}(\cdot)$ hides logarithmic factors compared with $\Theta(\cdot)$~\footnote{This setting follows many existing works, e.g., ~\cite{chang2019complexity,ghaffari2022massively,biswas2021massively,ghaffari2020improved}. We also note that some other studies on MPC, e.g., ~\cite{DBLP:conf/pods/QiaoT21, tao2022parallel,hu2021cover,hu2020massively}, enforce a stricter requirement that $M=\Theta(F/S)$.}. In MPC, each machine holds part of the data and communicates its local results with other machines via a synchronized message-passing communication round, subject to the constraint that each machine per round can send/receive messages of size at most $S$. 

In modeling distributed computation, the local computation cost within each machine is typically dominated by the synchronization cost among the machines. Hence, by convention, the local computation cost is omitted in MPC analysis and the focus is on reducing communication rounds. 
Furthermore, there is an intrinsic trade-off between the number of communication rounds and space-per-machine. Consider the SimRank computation over a graph with $m$ edges and $n$ nodes. $S=\Omega{(m+n)}$ is a trivial case because all the computation can be done locally in a machine, thus $O(1)$ rounds are sufficient. For large graphs, however, it is more important to consider a sub-linear space setting (i.e., $S=o(n)$) which is typical in many distributed systems and existing studies (e.g., ~\cite{DBLP:journals/talg/CzumajDP21, DBLP:conf/stoc/LackiMOS20,DBLP:conf/podc/Behnezhad0DFHKU19}). Under such a setting, communication among the machines is necessary for capturing the whole graph view, and the purpose is to minimize communication rounds for single-source SimRanks approximation.

\vspace{1mm}
\noindent
{\bf Open Problem.} We analyze existing representative distributed SimRank algorithms and summarize their requirements regarding communication rounds and machine space in Table~\ref{tab:complexity}\footnote{For the limitation of space, we leave the related work analysis in Appendix \ref{appen:baseline note}. As the authors did not give an analysis based on the MPC model, we analyze them on our own.}. In a nutshell, they require either $\Omega(n)$ per-machine space (UniWalk \cite{DBLP:journals/tkde/SongLGZWY18}) or $\Omega{(\log n)}$ communication rounds (DISK \cite{DBLP:journals/pvldb/0012XFC00M20} and CloudWalker \cite{DBLP:journals/pvldb/LiFLCCL15}).  
We remark that overcoming the $\Theta(\log n)$ round complexity barrier in MPC for natural problems is usually challenging and it has attracted tremendous interest~\cite{luo2022distributed,DBLP:conf/stoc/LackiMOS20,DBLP:conf/aaai/Luo19,DBLP:conf/focs/BehnezhadDELM19,DBLP:journals/corr/abs-1807-05374,DBLP:journals/talg/CzumajDP21,luo2020improved} to improve the distributed computation down to $O(\textbf{poly}(\log \log n))$ rounds. In Section~\ref{sec:warm-up} we will also highlight why an intuitive idea based on the state-of-the-art SimRank algorithm cannot achieve this goal. Hence, a natural {\it open theoretical problem} is raised:

{\it Given an error threshold $\epsilon\in (0,1)$, let an approximate single-source SimRank algorithm be an algorithm that outputs SimRank values with $\epsilon$ absolute errors. Is there a distributed approximate single-source SimRank algorithm over a graph of $n$ nodes that can be finished in $o(\log n)$ rounds, while each machine only needs $o(n)$ space?}

\begin{table*}[tb]

\centering
\label{Table1}
\caption{Communication rounds of distributed single-source SimRank Computation on a graph of $n$ nodes and $m$ edges. We explore the existing distributed or parallel SimRank algorithms and analyze them in terms of communication rounds and space per-machine. The analysis is based on synchronized round communication. The total space cost per round is the sum of the space occupied by each machine. The space per-machine refers to the worst case space cost in a machine. Due to unbalanced workloads, the worst-case per-machine space complexity can be the same as the total space complexity.
}

\label{tab:complexity}
\setlength{\tabcolsep}{3mm}{
\begin{threeparttable}
\begin{tabular}{c|ccccc} \toprule[1pt]
{\bf Algorithm} & {\bf Rounds} & {\bf Total space cost per round} & {\bf Accuracy error}& {\bf Space cost per-machine} \\ \midrule[0.5pt]
CloudWalker \cite{DBLP:journals/pvldb/LiFLCCL15} \tnote{1} & $O(\frac{nl(I+l)}{b})$  & $O\left(\frac{bl^2\log n}{\epsilon_p^{2}}+n+m\right)$ & no guarantees & $O\left(\frac{bl^2\log n}{\epsilon_p^{2}}+\frac{m+n}{M}\right)$\\
UniWalk \cite{DBLP:journals/tkde/SongLGZWY18} \tnote{2}& $O(l)$&
$O\left(\frac{n^2 l\log n}{\epsilon^2}\right)$ 
& $\epsilon$ &
$O\left(\frac{n^2 l\log n}{\epsilon^2}\right)$ 
\\

DISK \cite{DBLP:journals/pvldb/0012XFC00M20}\tnote{3} & $O\left(\frac{\log^2 n}{ \epsilon_d^{2}}+K\right)$ & $O\left(m \log n+Kn\right)$ &$\frac{c\left(1-c^{K}\right) \epsilon_d}{1-c}+c^{K+1}$& $O\left(\frac{m\log n+Kn}{M}\right)$ \\
{\bf Ours\tnote{4}} & $O\left(\log^2\log n\right)$ & 
$O\left(m\log n+n^{1+o(1)}\log^{4.5} n+\frac{n\log^{3+o(1)}n}{2\left(\epsilon-\frac{3}{n}\right)^2}\right)$
& $\epsilon$ & $O(n^\alpha)$\\
 \bottomrule[1pt]
\end{tabular}
\begin{tablenotes}
       \footnotesize
       \item[1] $\epsilon_p$ is the error when estimating the random walk distribution, $b$ is the parameter that controls the number of nodes to be handled in a single machine, and $I$ is the number of iterations in the Jacobi method. $l$ is a user-defined walk length or number of steps.
       \item[2] $l$ is a user-defined walk length or number of steps. 
       \item[3] $K$ is the number of truncated terms of linearized SimRank and $\epsilon_d$ is an internal estimation error threshold.
       \item[4] $\alpha <1$ and $n^{\alpha}>{\log^5n}/({\log\frac{1}{\sqrt{c}}})$.
\end{tablenotes}
\end{threeparttable}
}

\end{table*}

\subsection{Our Main Results} 
In this paper, we give a positive answer to the aforementioned open problem and present a distributed single-source SimRank algorithm that suits the Massively Parallel Computing (MPC) model~\cite{karloff2010model}. Particularly, for a graph of $n$ nodes and $m$ edges, we focus on the MPC model that involves a set of machines, each having a sub-linear space $S=n^{\alpha}$ for some $\alpha \in (0,1)$. 
Our main results can be stated by the following theorem.

\begin{theorem}\label{the:contri}
{Given a query source node $s$ in a graph of $n$ nodes and $m$ edges, there is an algorithm that computes $\epsilon$-absolute-error guaranteed SimRank values between $s$ and all the other graph nodes using $M$ machines in $O\left(\log^2\log n\right)$ communication rounds with high probability~\footnote{We say an event happens with high probability, if there exists a constant $\tau>0$ such that the event happens with probability at least $1-\frac{1}{n^\tau}$, where $n$ is the number of graph nodes.}. This algorithm only requires that the space per machine is $S=n^{\alpha}$ for some $\alpha<1$, and $M$ is some value of $\Tilde{O}((m+n)/S)$.}
\end{theorem}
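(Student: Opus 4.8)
The plan is to reduce single-source SimRank to a collection of decay-based reverse random-walk experiments, generate all the required walks in $O(\log\log n)$ ``doubling'' phases, and pay an extra $O(\log\log n)$ factor per phase for the load-balanced routing that the sub-linear space budget forces on us. \textbf{Step 1 (from SimRank to random walks, and fixing the sample count).} Using the $\sqrt{c}$-decay walk characterization recalled in Section~\ref{sec:pre random}, $s(s,v)$ equals (after the standard first-meeting correction) the probability that two independent reverse $\sqrt{c}$-decay walks --- one rooted at $s$, one at $v$ --- occupy the same vertex at the same time step. I truncate every walk at length $L=\Theta(\log_{1/\sqrt c} n)$; geometric decay makes the contribution of longer walks $O(1/n)$, and discretization/rounding contributes another $O(1/n)$, which is where the additive slack $3/n$ comes from (and why the hypothesis $\epsilon>3/n$ is needed --- it keeps the sampling budget finite). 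A Hoeffding bound on the $[0,1]$-valued meeting indicators plus a union bound over the $n$ targets shows that $R=\Theta\!\big(\log n/(\epsilon-\tfrac{3}{n})^2\big)$ walk pairs per target suffice to estimate every $s(s,v)$ to within $\epsilon$ with high probability, which accounts for the $\tfrac{n\log^{3+o(1)}n}{(\epsilon-3/n)^2}$ term in the space bound.

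\textbf{Step 2 (generating all the walks in $O(\log^2\log n)$ rounds).} This is the core. The adjacency structure is stored across the machines with $O(\log n)$-fold replication (the $m\log n$ term), so a single random step from any vertex can be served within the per-machine send/receive budget; this produces, for every node, $R$ length-$1$ walk fragments in $O(1)$ rounds. I then run $\lceil\log_2 L\rceil=O(\log\log n)$ \emph{doubling} phases: in phase $i$, a fragment of length $2^i$ ending at a vertex $w$ is concatenated with the length-$2^i$ fragment rooted at $w$, doubling every walk's length. The obstacle --- and the reason a phase is not $O(1)$ rounds --- is \emph{skew}: a popular vertex $w$ may be the endpoint of $\Omega(nR)$ fragments, so naively shipping every fragment to the machine owning $w$'s fragment violates the $n^{\alpha}$ budget. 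I resolve this with the standard MPC skew-handling primitive: sort the extension requests by endpoint (an $O(1/\alpha)=O(1)$-round operation when $S=n^{\alpha}$) and replicate each vertex's fragment to a number of machines proportional to its request load through a shallow broadcast/aggregation tree of depth $O(\log\log n)$. Reusing this primitive inside each of the $O(\log\log n)$ doubling phases gives the $O(\log^2\log n)$ round count, while the load-proportional replication contributes the $n^{1+o(1)}\log^{4.5}n$ term and keeps per-machine space at $n^{\alpha}$.

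\textbf{Step 3 (meeting detection, aggregation, wrap-up).} Once the $R$ walks from $s$ and the $R$ walks from every $v$ are materialized as lists of (sample index, step, vertex) triples, meeting detection is a group-wise join: sort all triples by key (sample index, step, vertex), and within each group record the earliest step at which an $s$-walk and a $v$-walk coincide (this simultaneously realizes the first-meeting correction); then aggregate the resulting indicators per target $v$ and divide by $R$. Each step here is an $O(1)$-round sort/aggregate in MPC with $S=n^{\alpha}$. Summing: $O(1)+O(\log\log n)\cdot O(\log\log n)+O(1)=O(\log^2\log n)$ rounds; per-machine space never exceeds $n^{\alpha}$ because every primitive invoked respects it; and $M=\tilde O((m+n)/S)$ suffices to hold the replicated graph together with the $\tilde O\!\big(n/(\epsilon-\tfrac{3}{n})^2\big)$ words of walk and estimator data, matching the statement.

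\textbf{Main obstacle.} The crux is Step 2: routing walk-extension requests through highly skewed endpoints without either overflowing a machine or spending $\Omega(\log n)$ rounds. The fix --- load-proportional replication through an $O(\log\log n)$-depth tree, amortized across all doubling phases --- is precisely what forces the extra $\log\log n$ factor, and it has to be argued carefully so that it simultaneously respects the $n^{\alpha}$ per-machine bound, stays within the stated $n^{1+o(1)}\mathrm{polylog}\,n$ total-replication budget, and preserves the independence and uniformity of the stitched walks so that the Step-1 concentration argument still applies verbatim. A secondary subtlety is that $\sqrt{c}$-decay walks have a priori unbounded length, so the doubling must operate on the length-$L$ truncation, with the $O(1/n)$ tail charged against the $3/n$ slack.
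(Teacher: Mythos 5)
Your overall architecture coincides with the paper's: reduce $s(s,\cdot)$ to meeting probabilities of reverse $\sqrt{c}$-decay walks (Lemma~\ref{meeting prob}), truncate at $\Theta(\log_{1/\sqrt{c}} n)$ and charge the geometric tail plus rounding to the $3/n$ slack, take $\Theta\bigl(\log n/(\epsilon-\tfrac{3}{n})^2\bigr)$ samples with Hoeffding plus a union bound over the $n$ targets, and finish the meeting detection with $O(1)$-round MPC sorting and a predecessor-style join. The one substantive divergence is your Step 2. The paper does not re-derive the doubling construction: it invokes the MPC random-walk sampler of~\cite{DBLP:conf/stoc/LackiMOS20} as a black box (Theorem~\ref{stoc}), and its only new work there is a packaging trick --- it runs $\Theta(\log_{1/\sqrt{c}} n)$ instances of that sampler concurrently, one per walk length $i$ with $N_i\propto(\sqrt{c})^i(1-\sqrt{c})$ walks, each instance confined to an $n^{\alpha}/\log_{1/\sqrt{c}}n$ slice of every machine's memory (Theorem~\ref{thm:walk}; this is where the condition $n^{\alpha}>\log^5 n/\log\tfrac{1}{\sqrt{c}}$ comes from). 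Because walks are thereby produced stratified by length, the paper also needs an explicit $O(1)$-round shuffling step before pairing, which your formulation (independent per-walk geometric truncation) would sidestep --- a genuine simplification if you make the truncation procedure explicit.

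Where your sketch has a real gap is exactly the skew handling you yourself flag as the main obstacle. As written --- ``replicate each vertex's fragment to a number of machines proportional to its request load'' --- the construction is not sound: if two different partial walks arriving at a popular vertex $w$ are both extended by copies of the \emph{same} length-$2^i$ fragment rooted at $w$, they coincide from $w$ onward, the stitched walks are no longer independent, and the Hoeffding argument of your Step 1 no longer applies. What is actually required is that each vertex pre-generate a load-proportional number of \emph{distinct, independent} fragments and that demands be matched to fragments without overflowing any machine or spending $\Omega(\log n)$ rounds; doing this (and handling the event that the pre-generated supply is exhausted, which is why the sampler is ``imperfect'' and fails with probability $O(n^{-1})$) is precisely the technical content of Theorem~\ref{stoc}, not a routine sort-and-broadcast primitive. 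If you cite that result your argument closes and is essentially the paper's; if you intend to prove it yourself, Step 2 is missing its core argument.
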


To the best of our knowledge, this is the first distributed single-source SimRank algorithm that achieves sub-$\log n$ communication rounds while only requiring strongly sub-linear per-machine space.
\section{Preliminaries}
We first introduce the distributed model we employed. Then we discuss the basic preliminary concepts and algorithms of SimRank computation. The frequently used notations are listed in Table \ref{table:fre_notations}. 

\begin{table}[t!]
    \centering
    \caption{Frequently used notations.}\label{table:fre_notations}
    \vspace{-0.5em}
    \begin{tabular*}{\hsize}{@{}@{\extracolsep{\fill}}|l|l|@{}}
    \hline
    {\bf Notation} & {\bf Description}  \\
    \hline\hline
        $G\left(V, E\right)$& Directed graph $G$ with node set $V$ and edge set $E$ \\  
        $n, m$& $n = |V|, m = |E|$\\
         $s(v, u)$& SimRank score between node $v$ and $u$\\
$\tilde{s}(v, u)$& Estimated SimRank score between node $v$ and $u$\\
$l$&Length or steps of random walks\\
$\epsilon$&Error threshold for SimRank scores \\
$c$& Decay factor in the definition of SimRank scores  \\
$M$& Number of all machines\\
$S$& Available space per machine\\
$\alpha$& Constant factor to measure the space\\
$N_l$& Actual number of Length-$l$ random walks\\
    \hline
    \end{tabular*}
    \label{tab:parameter description}
\end{table}




\subsection{Distributed Computation Model}
Distributed computation is one of the most important techniques in the era of big data to address various applications~\cite{kuhn2010distributed,DBLP:conf/edbt/LuoZXYLK23,kossmann2000state,DBLP:journals/pvldb/0012XFC00M20, DBLP:journals/pvldb/LiFLCCL15,DBLP:journals/tkde/SongLGZWY18,luo2022distributed,luo2014distributed,luo2012disks,klauck2014distributed}. In recent years, Massively Parallel Computation (MPC) \cite{karloff2010model} becomes a popular theoretical framework in modeling the complexity of a distributed algorithm, because it closely simulates the situation of general distributed computation. An MPC model has three important parameters: the input data size $F$, the number of involved machines $M$, and the space capacity (words) $S$ on each machine. 
It is required that when given $F$ and $S$, the number of machines $M$ should be of $\Tilde{O}(\frac{F}{S})$, where {$\tilde{O}(\cdot)$ hides a poly-logarithmic factor}. Our main focus on the MPC model derives from several perspectives as follows:

\vspace{1mm}
\noindent
\textbf{Space.}
Consider an input graph of $n$ nodes. MPC for graph algorithms can be categorized into three types: strongly \textbf{super-linear} space ($S = n^{1+\omega}$) for some constant $\omega>0$, near \textbf{linear} space ($S = \Theta (n)$), and strongly \textbf{sub-linear} space ($S = n^{\alpha}$) for some constant $\alpha \in (0,1)$. A super-linear model can typically be employed with a local algorithm, which loses the generality to be employed on a large scale of data. Hence, many studies focus on sub-linear settings that better capture the scalability of a distributed system. 
In this paper, we focus on the strongly sub-linear setting where each machine cannot even store the whole set of graph nodes. 

\vspace{1mm}
\noindent
\textbf{Communication Rounds.}
The computation in the MPC model is based on {\it communication rounds}. Initially, each edge is randomly assigned to a machine. We assume each node has an integer ID from $1$ to $n$, and each machine has an integer ID from $1$ to $M$. The computation proceeds in synchronized rounds. At the beginning of a round, each machine may receive the messages sent from some other machines in the previous round. During a round, every machine conducts local computation based on its local data or messages received. Then each machine will send the computed results, packed as messages, to target machines. Each machine creates message packages to be routed onto the network and hence the size of messages sent/received per machine in one round shall not exceed the space capacity $S$. A new round starts only after the end of the previous round. The number of rounds needed for program execution is called \textit{round complexity}  \cite{DBLP:conf/focs/GhaffariKU19}, which describes the cost of a distributed computation as the dominating cost often comes from the costly communication among machines. The main principle of designing algorithms in the MPC model is to achieve low round complexity subject to the space constraint on each machine. 
%
%

\vspace{1mm}
\noindent
{\textbf{Existence of an MPC Algorithm.} }Given per-machine space $S=n^{\alpha}$, we say there exists an MPC algorithm, if for any $n\ge n_0$ where $n_0$ is a constant, there exists a distributed algorithm under the space constraint that uses $M$ machines and 
$M=\tilde{O}(\frac{m+n}{S})$. For ease of analyzing the existence of MPC algorithms, we first give a preparation lemma regarding machine space expansion, as follows.
\begin{lemma}\label{lemma:pre}
If there exists an $O(f(n))$-round distributed graph algorithm that works for per-machine space $S={\Theta}(n^{\alpha})$ for any $0<\alpha < 1$ using $M=\tilde{O}(\frac{m+n}{n^{\alpha}})$ machines, and $f(n)$ is a function not related to $\alpha$ (i.e., $\alpha$ only contributes a constant factor to the round complexity), then there exists an MPC algorithm that works for per-machine space $S=n^{\alpha}$ with the same round complexity.
\end{lemma}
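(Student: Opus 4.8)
The plan is to invoke the hypothesis not at the target exponent $\alpha$ but at a strictly smaller exponent, and then to \emph{consolidate} the many small machines it produces into fewer machines that each fit exactly into space $n^\alpha$. First I would set $\alpha'=\alpha/2\in(0,1)$ and apply the assumed algorithm with parameter $\alpha'$; by hypothesis this yields a distributed graph algorithm $\mathcal{A}'$ that uses per-machine space $\Theta(n^{\alpha'})$, say at most $c\,n^{\alpha'}$ words for some constant $c$ (depending on $\alpha$ only), on $M'=\tilde{O}\!\big(\tfrac{m+n}{n^{\alpha'}}\big)$ logical machines, and runs in $O(f(n))$ rounds---which, since $f$ does not depend on the exponent, is exactly the target round bound. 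I then fix $n_0$ large enough that $c\,n^{-\alpha/2}\le 1$ for all $n\ge n_0$, so that $c\,n^{\alpha'}=c\,n^{-\alpha/2}\cdot n^{\alpha}\le n^{\alpha}$, and set $g=\lfloor n^{\alpha/2}/c\rfloor\ge 1$.

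Next I would let each physical machine, which has space $n^\alpha$, host a block of $g$ consecutive logical machines of $\mathcal{A}'$. This uses $\lceil M'/g\rceil=\tilde{O}\!\big(\tfrac{m+n}{n^{\alpha}}\big)$ physical machines (as $g=\Theta(n^{\alpha/2})$ and $\alpha'=\alpha/2$), matching the MPC machine-count requirement, and one checks routinely that $\lceil M'/g\rceil\cdot n^\alpha=\tilde{\Theta}(m+n)$, so the input still fits. Because the MPC model assigns the input edges uniformly at random to the physical machines, each physical machine can, for free, split its own edges uniformly at random among its $g$ logical slots, reproducing exactly the random input distribution that $\mathcal{A}'$ expects. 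A single round of $\mathcal{A}'$ is then simulated by one physical round: a physical machine performs the (free) local computation of all $g$ of its logical machines---legitimate since it stores all $g\cdot c\,n^{\alpha'}\le n^\alpha$ words of their states---and then, for each destination physical machine, bundles and sends all messages emitted by its logical machines to the logical machines residing there; messages between logical machines on the same physical machine never leave it. The volume sent (and, symmetrically, received) per physical machine per round is at most $g\cdot c\,n^{\alpha'}\le n^\alpha$, within budget. Hence the round complexity stays $O(f(n))$, and by the definition of ``existence of an MPC algorithm'' given above the claim follows for all $n\ge n_0$.

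The step I expect to be the crux, and the reason the statement is not immediate, is that the two natural fixes for the constant-factor slack in $\Theta(n^\alpha)$ pull in opposite directions. When the hidden constant exceeds $1$, it cannot be absorbed into a hard budget of $n^\alpha$ by splitting a logical machine across several physical machines, since then no physical machine could ever assemble that logical machine's entire state to carry out its (free) local step; but shrinking the exponent so the space fits multiplies the machine count by a polynomial factor $n^{\alpha-\alpha'}$, which $\tilde{O}(\cdot)$ does \emph{not} hide. The resolution is that these two effects cancel exactly: consolidating $\Theta(n^{\alpha-\alpha'})$ \emph{whole} logical machines onto a single physical machine both restores the per-machine space to $n^\alpha$ and divides the machine count by that same polynomial factor, while keeping each logical machine entirely inside one physical machine so that both the local computation and the message routing of a logical round fit into a single physical round. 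What remains---checking the space and communication budgets and that the consolidated initial configuration matches the distribution assumed by $\mathcal{A}'$---is routine bookkeeping.
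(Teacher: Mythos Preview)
Your consolidation argument is correct, but the paper takes a different and simpler route. Rather than packing many small logical machines into one physical machine, the paper absorbs the hidden constant into the exponent: if the hypothesis gives space at most $C\cdot n^{\alpha}$, note that $C\cdot n^{\alpha}=n^{\beta}$ with $\beta=\alpha+\log_n C$. So to hit a target exponent $\beta$ exactly, one simply invokes the hypothesis at $\alpha=\beta-\log_n C\in(0,1)$ (valid for all large $n$); the resulting algorithm then uses space \emph{exactly} $n^{\beta}$, and its machine count $\tilde{O}((m+n)/n^{\alpha})$ equals $\tilde{O}((m+n)/n^{\beta})$ since the two differ only by the constant $C$. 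No simulation or message-bundling is needed.

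What each buys: the paper's exponent-shifting trick is essentially a one-line renaming, but it silently uses an $n$-dependent choice of the invoked exponent, which is legitimate only under the paper's per-instance ``existence'' definition of an MPC algorithm. Your approach fixes $\alpha'=\alpha/2$ once and for all, so the constant $c$ is fixed and the construction is uniform in $n$; the price is the extra (standard) bookkeeping of hosting $g$ logical machines per physical machine and verifying the send/receive budgets. Both yield the same $O(f(n))$ round bound.
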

This lemma eliminates the obstacle of analyzing round complexity when there is a constant factor expansion of the machine space. For example, consider that we have an algorithm with a certain distributed algorithm such that (1) its round complexity hides $\alpha$ factors; (2) applies to any $\alpha \in (0,1)$ with $M=\tilde{O}(\frac{m+n}{n^{\alpha}})$, and (3) the space per machine is $Cn^{\alpha}$ for some constant $C$. Then Lemma~\ref{lemma:pre} guarantees the existence of an MPC algorithm with $n^{\alpha}$ machine space.

\subsection{Approximate SimRank Computation}\label{sec:pre random} 
Given a directed graph $G = (V, E)$, and let $n=|V|$ and $m=|E|$. 
Following~\cite{wang2020exact,shi2020realtime}, we aim to compute approximate SimRank values with constant errors. 
%
%
In particular, given any source node $s\in G$ and a constant error $\epsilon$, we aim to compute the SimRank values between $s$ and any other node $u\in G$, such that 
$|s(s,u)-\tilde{s}(s,u)|\leq \epsilon$, 
where $s(s,u)$ denotes the true SimRank value and $\tilde{s}(s,u)$ denotes the estimated value.


Calculating the SimRank value iteratively according to Equation \ref{basic equation} may occupy large memory space and incur a high computation cost. Therefore, the state-of-the-art approaches employ the following $\sqrt{c}$-decay walk-based computation, first proposed in~\cite{tian2016sling}.

\begin{lemma}\label{meeting prob}
For any two nodes $u, v\in G$, the SimRank between $u$ and $v$ is equal to the meeting probability of two $\sqrt{c}$-decay walks starting at $u$ and $v$ on $\bar{G}$, where two walks meet if there exists an integer $i\ge 0$, such that the $i$-th step of the two walks visit the same node. 

\end{lemma}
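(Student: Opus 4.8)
The plan is to prove the identity by showing that the meeting probability, regarded as a function of the two starting nodes, satisfies \emph{exactly} the SimRank recursion of Equation~\ref{basic equation}, and then to invoke the uniqueness of its solution. Recall that a $\sqrt{c}$-decay walk on $\bar{G}$ starting from a node $w$ behaves as follows: it begins at $w$ (this is its $0$-th step); having reached a node $x$ at step $i$ and still being ``alive'', it terminates with probability $1-\sqrt{c}$ and otherwise (probability $\sqrt{c}$) moves to a node chosen uniformly at random among the out-neighbors of $x$ in $\bar{G}$ (equivalently, the in-neighbors of $x$ in $G$), which becomes its $(i{+}1)$-st step; a walk sitting at a node with no such neighbor simply terminates. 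For nodes $u,v$ let $f(u,v)$ denote the probability that two \emph{independent} such walks, one from $u$ and one from $v$, meet, i.e.\ that there is an integer $i\ge 0$ at which both walks are still alive and occupy the same node.

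First I would dispatch the base case: if $u=v$, both walks occupy $u$ at step $0$, so they meet with certainty and $f(u,u)=1=s(u,u)$. For $u\ne v$, a meeting cannot occur at step $0$, so it must occur at some step $i\ge 1$; in particular each walk must survive its first coin flip, an event of probability $\sqrt{c}\cdot\sqrt{c}=c$ by independence, after which the first walk moves to a uniformly random $u'\in\mathcal{I}(u)$ and the second to a uniformly random $v'\in\mathcal{I}(v)$. Conditioning on $(u',v')$ and using the memorylessness of the decay walk, the suffixes of the two walks (re-indexed to start at step $0$) are distributed exactly as two fresh independent $\sqrt{c}$-decay walks from $u'$ and $v'$, and the two original walks meet at some step $\ge 1$ precisely when these suffixes meet. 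This yields
\begin{equation*}
  f(u,v)=c\cdot\frac{1}{|\mathcal{I}(u)|\,|\mathcal{I}(v)|}\sum_{u'\in\mathcal{I}(u)}\sum_{v'\in\mathcal{I}(v)} f(u',v'),\qquad u\ne v,
\end{equation*}
with the convention that the right-hand side is $0$ when $\mathcal{I}(u)$ or $\mathcal{I}(v)$ is empty (a walk at a dead end cannot leave it, so no later meeting is possible). Hence $f$ satisfies precisely Equation~\ref{basic equation}.

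It then remains to observe that this system has a unique solution. Writing the recursion as a map $\Phi$ on the finite-dimensional space of matrices indexed by $V\times V$ — fixing every diagonal entry to $1$ and replacing each off-diagonal entry by the averaged, $c$-scaled sum of the corresponding neighbors' entries — one checks that $\|\Phi(X)-\Phi(Y)\|_\infty\le c\,\|X-Y\|_\infty$, so $\Phi$ is a contraction because $c<1$, and by the Banach fixed-point theorem it has exactly one fixed point. Both the SimRank matrix $s(\cdot,\cdot)$ and the meeting-probability matrix $f(\cdot,\cdot)$ take values in $[0,1]$ and are fixed points of $\Phi$, whence $f\equiv s$, which is the claim.

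The step that needs the most care is the recursive identity: one must argue cleanly that, after conditioning on survival of the first step and on the two first moves, the remainder of each walk is again a fresh decay walk (a strong-Markov-type statement for a randomly stopped walk), and that ``meeting at some step $\ge 1$'' of the original pair coincides under this conditioning with ``meeting'' of the shifted pair; the degenerate cases of dead-end nodes and of length-zero walks must be folded in so that the recursion and its conventions match Equation~\ref{basic equation} term for term. As a fallback that bypasses the fixed-point step, one could instead expand both sides as sums over pairs of finite node sequences — the SimRank side by unrolling Equation~\ref{basic equation} until the two walks first coincide, and the probability side by decomposing the meeting event according to the \emph{first} meeting step — and match the series directly, using that two decay walks simultaneously survive $i$ prescribed steps with probability $(\sqrt{c})^{i}\cdot(\sqrt{c})^{i}$ times the product of inverse in-degrees, i.e.\ $c^{i}$ times that product.
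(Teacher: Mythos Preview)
Your proof is correct and self-contained, but there is nothing in the paper to compare it to: the paper states this lemma as a known result taken from~\cite{tian2016sling} and does not supply its own argument. Your two-step strategy --- first showing that the meeting probability $f$ satisfies the SimRank recursion by conditioning on both walks surviving their first step and on their first moves (using the memorylessness of the geometric stopping time and of the uniform neighbor choices), and then invoking uniqueness of the fixed point via the $\ell_\infty$-contraction with factor $c<1$ --- is the standard way to establish this identity and is carried out correctly. The handling of dead-end nodes (empty $\mathcal{I}(u)$ or $\mathcal{I}(v)$) by convention is also consistent with how Equation~\ref{basic equation} is understood. The alternative series-matching argument you sketch as a fallback would also work and is essentially equivalent, amounting to unrolling the contraction iteration.
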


Here, a $\sqrt{c}$-decay walk on $\bar{G}$ from a node $u$ is a traversal on $\bar{G}$ such that at each step of the walk, it has $1-\sqrt{c}$ probability to stop at the current node, and otherwise jumps to the next node that is a uniformly chosen out-neighbor of the current node. Here we define a length-$l$ walk as a path that includes $l+1$ nodes and $l$ edges. We also say a length-$l$ walk has $l$ steps. Particularly, we have $(\sqrt{c})^i\cdot (1-\sqrt{c})$ probability to generate a length-$i$ walk from a given starting node.

\vspace{1mm}
\noindent
{\bf Monte Carlo Method.} By sampling $N$ pairs of $\sqrt{c}$-walks from $u$ and $v$, and if $H$ pairs of the walks meet, then $\frac{H}{N}$ is an estimate of $s(u,v)$. The sampling number $N$ controls the estimation accuracy.

\section{Warm-Up: A $O(\log n)$-Round Algorithm with Pseudo-linear Space}\label{sec:warm-up}
In this section, we present an algorithm that directly adapts the $\sqrt{c}$-walk-based computation for single-source SimRank approximation in the MPC model. We show that by properly setting the parameters, the algorithm can finish the computation in $O(\log n)$ rounds with high probability (i.e., with a probability at least $1-\frac{1}{n}$) while it requires pseudo-linear space-per-machine $S=\tilde{O}(n)$.

The main idea of the algorithm is to adapt the $\sqrt{c}$-walk based approach and apply the Monte Carlo method by setting the sample size $N=\frac{\log 2n^2}{2\epsilon^2}=O(\frac{\log {n}}{\epsilon^2})$, i.e., sampling $\sqrt{c}$-walks from the given source node $u$ and any other node $v$ over the reverse graph $\bar{G}$. By the Monte Carlo method, we need to create $N$ pairs of random walks from $u$ and $v$ and determine how many paired-up walks meet. The probability of the walk-meet, $\frac{H}{N}$, estimates $s(u,v)$. Consider a random variable $X_i=1$ if the two walks in the $i$-th pair meet, otherwise $X_i=0$. By Hoeffding's 
Inequality~\cite{hoeffding1994probability} we have the following inequality:

\begin{align}
    {\bm \Pr}{\left[\left|s(u,v)-\frac{H}{N}\right|\ge \epsilon\right]}&=  {\bm \Pr}{\left[\left|\sum_{1\leq i\leq N}\mathbb{E}(X_i)-\sum_{1\leq i\leq N}X_i\right|\ge  \epsilon\right]}  \nonumber \\ &\leq  2e^{-2N\epsilon^2} \text{ (By Hoeffding's Inequality)} \nonumber \\ & ={1}/{n^2}
\end{align}

Since there are $n$ SimRank estimations, by union bound, all the estimated SimRank has an error at most $\epsilon$ with a probability of at least $1-\frac{1}{n^2}\cdot n = 1-\frac{1}{n}$.

Next, we show that sampling $N$ random walks respectively from each node can be performed in parallel, and concurrently walking one step for all walks can be finished in $O(1)$ rounds with machine space $S=\tilde{O}(n)$. Particularly, we assign each edge $(u,v)$ with an ID $u\cdot n +v$. In MPC we can use $O(1)$ communication rounds to sort~\cite{MPA,DBLP:conf/isaac/GoodrichSZ11} the edges based on their IDs and store them from Machine 1 to Machine $M$ in order. To separate the neighbor set of consecutive nodes, we insert a special edge placeholder $(v,0)$ with ID $v\cdot n$ for every node $v$ and sort them together with the actual edges. After sorting, the placeholder $(v,0)$ will be placed right before the first edge starting with $v$. By sorting, each edge also knows its rank in the ordering. Then node $v$ has $r_{v+1}-r_v-1$ neighbors, where $r_v$ (resp. $r_{v+1}$) denotes the rank of $(v,0)$ (resp. $v+1$). Evaluating $\{r_{v+1}-r_v-1|v\in V\}$ can be done in $2$ rounds: suppose each machine has $S=\tilde{O}(n)$, then all $\{r_v|v\in V\}$ can be sent to the same machine to compute {$\{r_{v+1}-r_v-1|v\in V\}$} and send $r_{v+1}-r_v-1$ back to the machine holding edge $(v,0)$. This will not break the $\tilde{O}(n)$ space as the size of the set $\{r_{v+1}-r_v-1|v\in V\}$ is $O(n)$. Furthermore, to store the random walks, when the walk path $P$ ending at node $u$ extends from node $u$ to $v$ along an edge $(u,v)$, the path $P\cup (u,v)$ will be sent to the machine holding $(v,0)$. 

Since the random walks are $\sqrt{c}$-decay, there is at most probability ${\epsilon^3}/{n^3}$ that a random walk has a length at least $3\log_{1/\sqrt{c}}{\frac{n}{\epsilon}}$. Hence all the $Nn$ random walks have length $O(\log \frac{n}{\epsilon})$ with high probability. Noting that it takes $O(l)$ rounds to compute length-$l$ random walks, the round complexity of computing all the random walks is $O(\log \frac{n}{\epsilon})$ with high probability. Furthermore, as there are $Nn=O(\frac{n\log n}{\epsilon^2})$ random walks and each walk has a length $O(\log \frac{n}{\epsilon})$ with high probability, the messages sent among the machines in a round is at most $O(\log {n}\cdot Nn)=O(\frac{n\log n\log \frac{n}{\epsilon}}{\epsilon^2}) =  \tilde{O}({n})$ with high probability (given $\epsilon$ is a constant).

Finally, to compute $\{s(u,v)| v\in V\}$ it remains to check in parallel how many pairs of walk meet, where each pair contains one walk from $u$ and the other walk from $v$. As we will explain shortly in Section~\ref{subsec:Parallel Random Walks}, this step can also be finished in $O(1)$ rounds. 

In summary, the algorithm we presented can finish approximating SimRank computation in $O(\log \frac{n}{\epsilon})$ rounds with high probability over MPC with $S=O(\frac{n\log n\log \frac{n}{\epsilon}}{\epsilon^2})$. One can see that directly adapting the existing random-walk-based algorithms is hard to break the $O(\log n)$-round barrier because in such designs each random walk step costs $O(1)$ rounds and a random walk has $O(\log n)$ steps with high probability. Meanwhile, it is also challenging to further reduce the space-per-machine because there can be hub nodes that are passed through by many random walks sourced at different nodes.

\section{$O(\log^2\log n)$-Round Algorithm with Sub-linear Space}\label{sec:overall}

In this section, we show several important improvements based on the algorithm introduced in the previous section, ultimately reducing the round complexity to $O(\log^2\log n)$ and space per machine to strongly sub-linear. 

To better understand our design, we outline challenges of direct adaptation of $\sqrt{c}$-walk based approach. First, the maximum length of a $\sqrt{c}$-decay walk can be infinite; a straightforward method to compute length-$l$ in a distributed environment easily entails $l$ communication rounds because one step may require one round of communication when some neighbors are located in a different machine. In Section~\ref{sec:warm-up}, we presented an analysis showing that with high probability that the communication rounds can be bounded by $O(\log \frac{n}{\epsilon})$. One downside of such techniques is that the number of rounds is not bounded deterministically. 
Second, evaluating single-source SimRank values for node $s$ requires running random walks from both $s$ and all the other nodes, resulting in a large number of random walks being conducted. This can lead to a high space cost per machine because some hub nodes are prone to be passed by many random walks, and this is the main reason why the algorithm in Section~\ref{sec:warm-up} cannot achieve a strongly sub-linear per-machine space. Therefore, a careful design of the MPC algorithm is required to guarantee a small number of rounds and a low space cost for each machine.
Third, the random walks, once computed in the MPC model, are stored in different machines. As such, detecting whether two walks meet may overload the machine regarding space cost. 
\subsection{New Interpretation of Walk-based Approach}
{To address the aforementioned challenges, we {reinterpret} the $\sqrt{c}$-walk based SimRank computation between node $u$ and node $v$ in a batch manner using the following three operations, which are more MPC-friendly:}

\vspace{1mm}
\noindent
{\bf (a) Random walk generation.} Instead of generating $\sqrt{c}$-decay random walks whose lengths are non-deterministic, we generate random walks with deterministic length distribution. Particularly, to generate $N$ $\sqrt{c}$-walks from $u$, we will generate $N\cdot \sqrt{c}^i(1-\sqrt{c})$ length-$i$ walks for $i\ge 0$, following the corresponding geometric distribution of walk lengths. 
Similar operations are conducted for random walks starting from $v$. Fixing the random walk lengths is more MPC-friendly. As we will show in Section~\ref{subsec:Parallel Random Walks}, we can generate sufficient such random walks both from the source node $s$ and each node $v \in V$ in only $O\left(\log^2\log n\right)$ MPC rounds. 

\vspace{1mm}
\noindent
{\bf (b) Random walk shuffling and matching.} Since the random walks from $s$ (or $v$) are generated in the order of increasing lengths (due to Step (a)), pairing up the random walks from $s$ and $v$ directly for meeting detection is not valid for estimating SimRanks as the $i$-th walks from $s$ and $v$ are correlated (they have the same length). In order to maintain the randomness, we shuffle the random walks computed in Step (a). 

\vspace{1mm}
\noindent
{\bf (c) SimRank computation.} Last, we compute the SimRanks by detecting how many pairs of walks meet and estimating the SimRank value $s(s,v)$ by Lemma~\ref{meeting prob}. 




\subsection{Overview of Main Algorithmic Steps}\label{sec:routing}

Based on the new interpretation, we give our main algorithmic steps in~Algorithm \ref{alg:overall}, which consists of the following five main stages. For ease of presentation, we first outline the main idea in this section and defer the detailed MPC-related operations in Section~\ref{sec:MPC_analysis}.

\setlength{\textfloatsep}{1pt}
\begin{algorithm}[t]
\small
\SetKwInOut{Input}{Input}\SetKwInOut{Output}{Output}
\Input{Graph $G =(V, E)$; source node $s$; decay factor $c$; Montel-Carlo failure probability $\delta$; accuracy error $\epsilon$}
\Output{SimRank scores $\tilde{s}(s, u)$ for each $u \in V$}
\BlankLine
\emph{{Store $G$ across multiple machines}}\;

\emph{{Reverse the edges of $G$ within each machine to form $\bar{G}$}}\;

\emph{$l \gets \log_{\frac{1}{\sqrt{c}}} n$}\;

    \For{$i = 0$ \KwTo $l$ in parallel}{
    $N_i = \left\lceil \frac{\log{2n}}{2\left(\epsilon-\frac{3}{n}\right)^2}\cdot (\sqrt{c})^{i}\cdot (1-\sqrt{c})\right\rceil$\;
    \ForEach{node $v \in V$ in parallel}{
     Generate $N_i$ random walks from node $v$ in $O\left(\log^2\log n\right)$ rounds\; 
    }
    
}
Shuffle and decompose random walks (See Sections~\ref{subsec:Shuffle Random  Walks} and~\ref{subsec:Decompose Random  Walks})\;
    Run \textbf{Algorithm~\ref{alg:meeting}} to calculate $\tilde{s}(s, u)$ for each $u\in V$\; 
\caption{Overall algorithm}
\label{alg:overall}

\end{algorithm}
\setlength{\textfloatsep}{1pt}

\noindent
\textbf{Initial State.} The SimRank algorithms are based on the topology information of the input graph. 
Initially, the whole graph should be partitioned across different machines, as assumed by the MPC model. We assume that each machine holds a random partition of edges of the graph (Line 1 in Algorithm \ref{alg:overall}), and hence each machine holds roughly $\frac{m}{M}$ edges. We also reverse the edges so that the graph represents $\bar{G}$ (Line 2 of Algorithm~\ref{alg:overall}). 




\vspace{1mm}
\noindent
\textbf{Parallel Random Walks Generation.} 
This stage corresponds to Operation (a) mentioned earlier. As we need to compute single-source SimRank values from any source node $s$ to all other nodes, we generate $N$ $\sqrt{c}$-walks from each node. 
Further, we truncate them at $\log_{{1}/{\sqrt{c}}} n$ length (Line 3 of Algorithm~\ref{alg:overall}). By setting $N=\frac{\log{2n}}{2(\epsilon-3/n)^2}$, we show that the truncated random walks will only cause negligible influence on the final accuracy (see Section \ref{sec:MPC_analysis}), and we can guarantee the $\epsilon$ error bound of SimRank values. Another significant problem is when generating $N_i=N\cdot (\sqrt{c})^i (1-\sqrt{c})$ length-$i$ walks, $N_i$ may not be an integer. In Section~\ref{sec:correctness} we give a rounding technique and prove that the rounding still guarantees the error bound. We also prove that these random walks can be obtained in the MPC model using $O(\log^2 \log n)$ communication rounds (Lines 4-7 of Algorithm~\ref{alg:overall} and details in Section~\ref{subsec:Parallel Random Walks}). 


\vspace{1mm}
\noindent
\textbf{Shuffling Random Walks.}
To guarantee the randomness of each pair of random walks in our Monte Carlo simulation, the generated random walks have to be shuffled (Line 8 of Algorithm \ref{alg:overall}). The shuffling operation in the MPC model is not as trivial as in a single machine because the generated walks from the same node can be located in different machines. Shuffling these walks incurs communication between machines, which may violate the space-bound in each machine. We discuss this issue in detail in Section~\ref{sec:MPC_analysis}. 

\vspace{1mm}
\noindent
  \textbf{Decomposing Random Walks.}
  After shuffling, we pair the $j$-th walk from the source node $s$ with the $j$-th walks from other nodes respectively, and detect whether two walks in each pair meet. Detecting whether two walks meet is relatively simple in the single-machine setting because all information can be loaded locally. 
  However, in a distributed environment, we need to detect, many times, whether two paired-up walks meet while the walks are stored at different machines. This poses drastically different challenges. To address the challenges, we decompose the generated random walks into tuples, each containing one visited node in the walk (except the starting node). Each tuple contains five elements to convey all the information needed to detect whether the walk intersects another walk in the specified node. As shown in Figure~\ref{fig:meeting}, there is a walk $\{s, v_1, v_2, v_3\}$ decomposed into three tuples $(j, 1, v_1, 3, s)$, $(j, 2, v_2, 3, s)$, $(j, 3, v_3, 3, s)$, where $j$ implies that the tuple is decomposed from the $j$-th walk of those starting from $s$. The remaining elements in the tuple are the {\it walk step}, {\it visited node at the particular walk step}, {\it the total steps in the walk}, and {\it source node}, respectively. 
  

\vspace{1mm}
\noindent
\textbf{Computing SimRank Values.}
Evaluating the SimRank value between $s$ and $u$ is based on the probability that two walks from $s$ and $u$ meet.
This can be detected by sorting the decomposed tuples (Line 9 of Algorithm~\ref{alg:overall}$\to$ Line 1 of Algorithm~\ref{alg:meeting}). Figure~\ref{fig:meeting} shows the $j$-th walks from $s$ and $u$ respectively. The two walks meet because they share the same node $v_2$ in their third walk steps. Corresponding to $v_2$, there are two decomposed tuples from the two walks, denoted by $(j, 2, v_2, 3, s)$ and $(j, 2, v_2, 4, u)$. We note that they {\it share the first three elements}. In general, it is easy to verify that if two walks meet, there must be two decomposed tuples, each from one walk, that share the first three elements in the tuple. More formally, suppose there are two $j$-th walks $\{v_0, v_1, \ldots, v_{l_1}\}$ and $\{u_0, u_1, \ldots, u_{l_2}\} (0\le l_1, l_2\le \log_{{1}/{\sqrt{c}}} n)$ 
starting from $v_0$ and $u_0$. Walk $\{v_0, v_1, \ldots, v_{l_1}\}$ is decomposed into $l_1$ tuples, denoted by $(j, i, v_i, l_1, v_0)$ ($1\leq i \leq l_1$); Walk $\{u_0, u_1, \ldots, u_{l_2}\}$ is decomposed into $l_2$ tuples, denoted by $(j, k, u_k, l_2, u_0)$ ($1\leq k\leq l_2$). Then, if the two walks meet, or equivalently, they share the same node $v_i$ in the same step $i$, then there must be $v_i=u_i$ and the corresponding two tuples (for $v_i$ and $u_i$) share the first three elements. Again, the challenges are in a proper rearrangement of the tuples across the machines, and the collection and aggregation of the walk-meeting cases within each machine. We leave MPC details in Section~\ref{sec:MPC_analysis}.  


\begin{figure}

\centering
\includegraphics[width=2.5in]{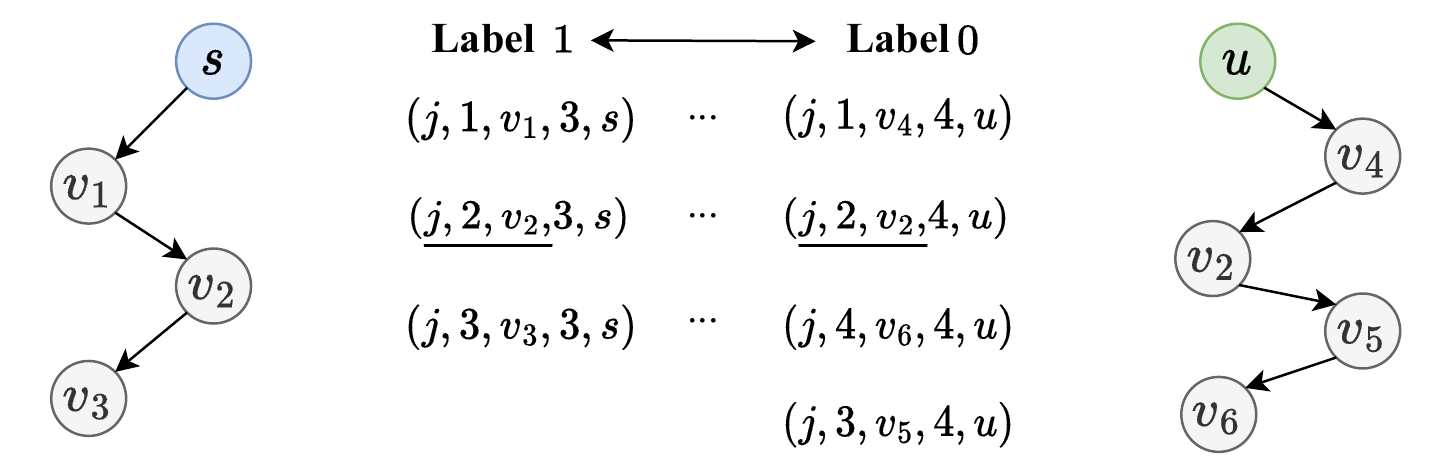}
\caption{\label{fig:meeting}Detect meeting using tuple association.}

\end{figure}



With the above five stages, we highlight the major differences between our approach and the classic $\sqrt{c}$-walk-based algorithms: \textbf{(a)} the lengths of the walks are regularized to geometric distributions, and longer walks are truncated and we show that this would not violate the error tolerance in Section~\ref{sec:correctness}; \textbf{(b)} most steps are redesigned non-trivially so that all the steps can be efficiently implemented in the MPC model (see details in Section~\ref{sec:MPC_analysis}).
\section{Detailed MPC Operations}\label{sec:MPC_analysis}

We present the algorithms appearing in Section \ref{sec:routing} in more details from the perspective of MPC model based implementation. 

\subsection{Parallel Random Walks
Generation}\label{subsec:Parallel Random Walks} 

Performing random walks in the MPC model will incur communication rounds because the next sampled neighbor for each node can be located in different machines. Let us consider the situation of a node $v_0$ sampling its neighboring node $v_1$ in the walk. In the MPC model, as $v_0$ and $v_1$ can be in different machines, forming a walk of $v_0\to v_1$ may need one communication round. In general, forming an $l$-step random walk in the MPC model typically incurs $l$ communication rounds. 
%
For the purpose of reducing communication rounds, we firts introduce a result provided by WRME \cite{DBLP:conf/stoc/LackiMOS20}, and then give our extension in Theorem~\ref{thm:walk}:

\begin{theorem}\label{stoc}
Let $G$ be a directed graph. Let $N$ and $l$ be positive integers such that $l=o(S) / \log ^{3} n$, where $S$ is the available space per machine. For the task that samples $N$ independent random walks of length $l$ starting from each node $v$ in $G $, there exists an MPC algorithm that runs in $O\left(\log ^{2} \log n+\log ^{2} l\right)$ rounds and uses $O\left(m+n^{1+o(1)} l^{3.5}+N n l^{2+o(1)}\right)$ total space and strongly sub-linear space per machine $S=n^{\alpha}$ ($0<\alpha<1$). The algorithm is an imperfect sampler that does not fail with probability $1-O\left(n^{-1}\right)$.
\end{theorem}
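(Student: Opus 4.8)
Theorem~\ref{stoc} is the random-walk-generation result of \cite{DBLP:conf/stoc/LackiMOS20}, so in the paper it is used as a black box; for completeness I outline how one would establish it. The plan is \emph{path doubling}: first sample, for every node $v$, a pool of independent length-$1$ walks; then repeatedly stitch pairs of length-$2^{j}$ walks into length-$2^{j+1}$ walks, where a length-$2^{j}$ walk ending at a node $w$ is extended by appending a \emph{fresh} independently sampled length-$2^{j}$ walk that starts at $w$. After $O(\log l)$ phases (the last one truncated to reach exactly $l$) every node holds its $N$ required walks of length $l$. Within a phase the operations are elementary: sort/group the walk records by their endpoints, then match each record to an as-yet-unused continuation drawn from that endpoint's pool. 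The hypothesis $l=o(S)/\log^{3}n$ is precisely what guarantees that a single walk together with its bookkeeping tags fits comfortably inside a machine.

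The crux --- and what makes the statement nontrivial --- is that a ``hub'' node $w$ can be the endpoint of an enormous number of walks in some phase, so a naive scheme would need a gigantic continuation pool at $w$, and in the strongly sub-linear regime that pool cannot even sit on one machine. The fix, following \cite{DBLP:conf/stoc/LackiMOS20}, has two ingredients. First, a \emph{multiplicity/concentration bound}: with high probability the number of length-$2^{j}$ walks ending at any fixed node exceeds its expectation by at most an $n^{o(1)}$ factor, so a pool of size $\tilde{O}(N)\cdot n^{o(1)}$ per node per phase is enough. Second, an \emph{imperfect sampler}: if any node ever exhausts its pool the whole run is declared failed, and the pool sizes are chosen so that this happens with probability only $O(n^{-1})$ --- exactly the failure probability in the statement. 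Per-machine space is kept at $n^{\alpha}$ by hashing each node's pool across enough machines that none holds more than $n^{\alpha}$ words, and routing continuation requests to the right machine, again via $O(1)$-round sorting.

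For the round count, there are $O(\log l)$ doubling phases, and a phase costs $O(\log l)$ rounds on its own because a walk record of length $2^{j}$ may be scattered over several machines and must be gathered and rewritten, which yields the $O(\log^{2} l)$ term; the $O(\log^{2}\log n)$ term is inherited from the space-efficient primitives of \cite{DBLP:conf/stoc/LackiMOS20} that redistribute hub records across machines without overflowing any of them. The total-space accounting then telescopes over the phases: at phase $j$ there are $O(Nn)$ walks, each carrying $O(2^{j})$ node labels, plus the $n^{1+o(1)}l^{3.5}$ overhead of the auxiliary pools and routing tables, giving $O(m+n^{1+o(1)}l^{3.5}+Nnl^{2+o(1)})$. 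I expect the hardest part to be exactly the multiplicity argument that certifies $n^{o(1)}$-sized pools suffice except with probability $O(n^{-1})$; the rest is careful but routine MPC plumbing with sorting, hashing, and prefix sums.
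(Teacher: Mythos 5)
The paper does not prove this statement at all: it is imported verbatim as a black box from the WRME paper \cite{DBLP:conf/stoc/LackiMOS20} (``we first introduce a result provided by WRME\ldots''), so there is no in-paper proof to compare against, and your decision to treat it as an external result and merely sketch its provenance is exactly right. Your sketch does capture the main ingredients of the cited proof: path doubling with fresh, single-use continuations to preserve independence, the hub-node load-balancing problem as the central obstacle, the declare-failure-on-pool-exhaustion mechanism that makes the sampler imperfect with failure probability $O(n^{-1})$, and the round/space accounting over $O(\log l)$ phases.

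One caveat worth flagging: the part you label ``the hardest part'' --- certifying that $n^{o(1)}$-sized pools suffice --- is understated for the \emph{directed} case, which is the case this paper actually needs. For undirected graphs one can size the pools using the explicit stationary distribution (proportional to degree), but for directed graphs no such closed form exists, and the bulk of the technical work in \cite{DBLP:conf/stoc/LackiMOS20} goes into iteratively estimating how many walks will pass through each node so that pools can be sized in advance; this estimation is precisely where the $n^{1+o(1)}l^{3.5}$ overhead and part of the round complexity come from, rather than being a routine concentration bound. Your sketch reads as if a single Chernoff-style multiplicity argument closes the gap, which would not survive at hub nodes of a directed graph without that preprocessing. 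Since the statement is used here only as a citation, this does not affect the paper, but a self-contained proof along your outline would need that extra machinery spelled out.
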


Essentially, Theorem~\ref{stoc} states that the generation of a length-$l$ random walk from {\it every} node can be round-efficient in the MPC model. Particularly, it takes only $O\left(\log ^{2} \log n+\log ^{2} l\right)$ communication rounds to generate $N$ length-$l$ random walks, each with different source node. However, even with Theorem~\ref{stoc}, we cannot directly give a reasonable round-complexity for SimRank evaluations if we use the original $\sqrt{c}$-decay walk-based method. The reason is that the length $l$ of a $\sqrt{c}$-decay walk can be infinite. To address this issue, as we show in Algorithm~\ref{alg:overall} Line 3 and Line 5, we carefully design the truncated walk length (Line 3) and the number of random walks to be sampled for each walk length (Line 5). The truncated length guarantees a reasonable bound of $O\left(\log ^{2} \log n\right)$ communication rounds when $l=\log_{{1}/{\sqrt{c}}} n$, and the number of random-walk samples ensures the SimRank estimation accuracy (we will formally prove it in Section~\ref{sec:correctness}).

\begin{figure}

\centering
\includegraphics[width=3.2in]{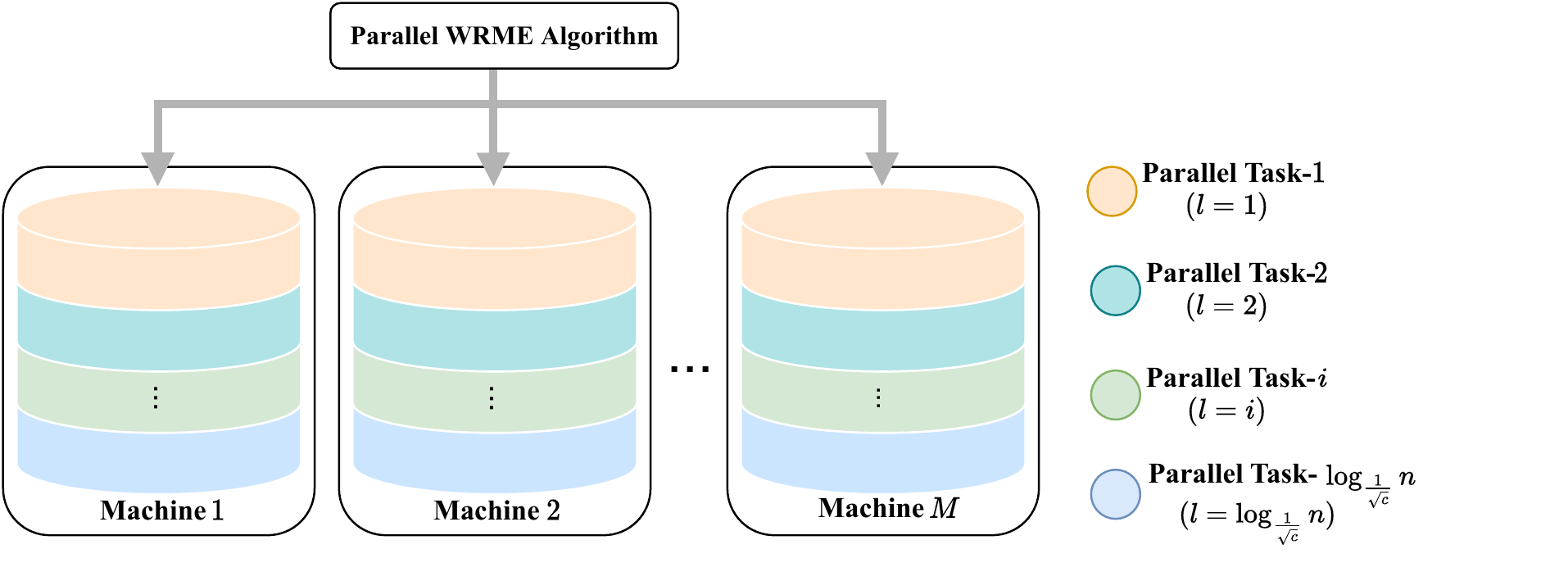}
  \caption{\label{fig:random walk} Each machine divides its space to handle multiple tasks in parallel.  }

\end{figure}

\vspace{1mm}
\noindent
{\bf Parallel Random Walks in $O(\log^2\log n)$ Rounds.} To effectively apply Theorem~\ref{stoc}, we let Task-$i$ be the generation of $N_i$ length-$i$ walks from every node. We let all Task-$i$ ($1\leq i\leq \log_{{1}/{\sqrt{c}}} n$) be conducted in MPC in parallel. 
We apply $\log_{{1}/{\sqrt{c}}} n$ WRME algorithm~\cite{DBLP:conf/stoc/LackiMOS20} instances {\it concurrently} in the MPC model, where each instance corresponding to one task. 

Unfortunately, concurrently conducting multiple WRME algorithms will incur a space cost higher than $n^{\alpha}$ in each machine because by default each algorithm instance can incur a local cost up to $n^{\alpha}$. To address this issue, we let $\beta$ satisfy that $n^{\beta}=n^{\alpha}/ \log_{{1}/{\sqrt{c}}}n$, and we apply the WRME algorithm concurrently for each Task-$i$ with $S=n^{\beta}$. Particularly, each machine space $n^{\alpha}$ is divided into subspaces of size $n^{\beta}$ to compute Task-$i$, as shown in Figure~\ref{fig:random walk}. By careful analysis of the round complexity and total space cost, we give the following theorem. 
\begin{theorem}\label{thm:walk}
The random walk generation in Algorithm~\ref{alg:overall} can be done in $O\left(\log^2\log n\right)$ rounds with high probability, with machine space $S=n^{\alpha}$ such that $n^{\alpha}>{\log^5 n}/({\log\frac{1}{\sqrt{c}}})$ and a total space $O\left(m\log n+n^{1+o(1)}\log^{4.5} n+\frac{n\log^{3+o(1)}n}{2(\epsilon-\frac{3}{n})^2}\right)$.
\end{theorem}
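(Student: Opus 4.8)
The plan is to reduce Theorem~\ref{thm:walk} to a careful accounting of what happens when we run $\log_{1/\sqrt c} n$ concurrent instances of the WRME algorithm (Theorem~\ref{stoc}), one per task, with the space budget of each machine partitioned into $\log_{1/\sqrt c} n$ equal subspaces. First I would fix the notation: let $L = \log_{1/\sqrt c} n$ be the truncation length, and for $1 \le i \le L$ let Task-$i$ be the generation of $N_i = \lceil \frac{\log 2n}{2(\epsilon - 3/n)^2}(\sqrt c)^i(1-\sqrt c)\rceil$ walks of length exactly $i$ starting from every node of $\bar G$. Run Task-$i$ on a per-machine subspace of size $S' = n^\beta$ where $n^\beta = n^\alpha / L$. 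The key observation is that Theorem~\ref{stoc} requires the walk length to satisfy $l = o(S')/\log^3 n$; here the \emph{largest} $l$ we ever use is $L = \log_{1/\sqrt c} n$, so the hypothesis $n^\alpha > \log^5 n / \log\frac{1}{\sqrt c}$ is exactly what guarantees $n^\beta = n^\alpha/L > \log^4 n = \omega(L \log^3 n)$, i.e. $L = o(n^\beta)/\log^3 n$, so every task is a legal instance.

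Next I would bound the round complexity. By Theorem~\ref{stoc}, Task-$i$ runs in $O(\log^2\log n + \log^2 i)$ rounds; since $i \le L = \log_{1/\sqrt c} n = \Theta(\log n)$, we have $\log^2 i = O(\log^2\log n)$, so \emph{each} task finishes in $O(\log^2\log n)$ rounds. Because the $L$ instances operate on disjoint subspaces of every machine and disjoint message budgets, they run genuinely in parallel (the synchronized rounds are shared), so the whole collection finishes in $O(\log^2\log n)$ rounds. The imperfect-sampler failure probability of each instance is $O(n^{-1})$, and — here one should be slightly careful — with only a constant increase in the sampling overhead WRME can be made to fail with probability $n^{-\tau}$ for any constant $\tau$ (or one simply absorbs the $L = \Theta(\log n)$ instances into a union bound after boosting $\tau$), giving the ``with high probability'' guarantee in the statement; I would cite Lemma~\ref{lemma:pre} here to argue that the constant-factor space blow-up from such boosting does not affect the existence of the MPC algorithm.

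Then I would total the space. Task-$i$ uses, by Theorem~\ref{stoc} instantiated with $N = N_i$ and $l = i$, total space $O(m + n^{1+o(1)} i^{3.5} + N_i\, n\, i^{2+o(1)})$. Summing over $i = 1, \ldots, L$: the first term contributes $O(mL) = O(m\log n)$; the second term is dominated by $i = L$, giving $O(n^{1+o(1)} L^{4.5}) = O(n^{1+o(1)}\log^{4.5} n)$ (the extra factor $L$ from the sum is swallowed by the $n^{o(1)}$); the third term is $\sum_i N_i\, n\, i^{2+o(1)}$, and since $N_i = O\!\big(\frac{\log 2n}{2(\epsilon-3/n)^2}(\sqrt c)^i\big) + O(1)$ — the $+1$ from the ceiling contributing a lower-order $O(nL^{3+o(1)})$ overall — we get $\sum_i N_i\, n\, i^{2+o(1)} = O\!\big(\frac{n\log n}{2(\epsilon-3/n)^2}\sum_i (\sqrt c)^i i^{2+o(1)}\big)$, and the series $\sum_{i\ge 1}(\sqrt c)^i i^{2+o(1)}$ converges to a constant (since $\sqrt c < 1$), so this term is $O\!\big(\frac{n\log^{1+o(1)} n}{2(\epsilon-3/n)^2}\big)$, which is absorbed into the claimed $O\!\big(\frac{n\log^{3+o(1)} n}{2(\epsilon-3/n)^2}\big)$. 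Combining the three contributions yields exactly the stated total space $O\!\big(m\log n + n^{1+o(1)}\log^{4.5} n + \frac{n\log^{3+o(1)} n}{2(\epsilon-3/n)^2}\big)$, and each machine holds only $n^\alpha$ words.

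The main obstacle, I expect, is not any single estimate but the bookkeeping of the $n^{o(1)}$ factors under summation over $\Theta(\log n)$ tasks: one must be sure that multiplying an $n^{o(1)}$ term by $\mathrm{poly}(\log n)$ (from the number of tasks, or from replacing $i$ by $L$) still yields $n^{o(1)}$, which is true but should be stated carefully, and that the geometric decay of $N_i$ really does make the per-length walk count sum to a constant rather than to $L$ copies of the worst case — this is the one place where using \emph{geometric} length distribution (Operation (a)) rather than a uniform truncation genuinely buys us something, and it is worth flagging explicitly. A secondary subtlety is justifying that concurrent WRME instances on partitioned subspaces do not interfere in the routing — this follows because in the MPC model the adversarial routing within one round is oblivious to which ``virtual'' instance a message belongs to, so each instance's $\le S'$-per-round message bound composes additively to $\le L \cdot S' = n^\alpha$ — but it deserves a sentence rather than being taken for granted.
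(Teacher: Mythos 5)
Your proposal is correct and follows essentially the same route as the paper's proof: choose $n^{\beta}=n^{\alpha}/\log_{1/\sqrt{c}}n$ so that the hypothesis $n^{\alpha}>\log^{5}n/\log\frac{1}{\sqrt{c}}$ makes each truncated length a legal WRME instance, run the $\log_{1/\sqrt{c}}n$ instances concurrently on partitioned subspaces, take the round bound from the longest instance, and sum the per-length space bounds using the geometric decay of $N_l$. If anything you are slightly more careful than the paper (which drops the ceiling in $N_l$ and bounds $l^{2+o(1)}$ by $\log^{2+o(1)}n$ before summing, whereas you track the $+1$ from the ceiling and let the series $\sum_i(\sqrt{c})^{i}i^{2+o(1)}$ converge, obtaining a tighter third term that is still absorbed by the stated bound), and your remarks on routing non-interference and on boosting the $O(n^{-1})$ failure probability over $\Theta(\log n)$ instances match the paper's union-bound remark following the theorem.
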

We note that the WRME algorithm is an imperfect sampler that fails with at most a probability $O(n^{-1})$. We only apply $O(\log n)$ times the sampler, which can still easily guarantee that our algorithm is successful with high probability (because $O(\log n) = o(n^{\tau})$ for any constant $\tau>0$).

\subsection{Shuffling Random  Walks} \label{subsec:Shuffle Random  Walks}

We need to shuffle the generated $N$ random walks sourced at each node to fully mimic the behavior of generating a $\sqrt{c}$-walk. {The reason for shuffling is that when we select two random walks that start from two nodes and calculate their meeting probability, the lengths of these two walks are not bound to be the same. This shuffling operation helps us
 guarantee the randomness in our Monte Carlo simulation of the generated $\sqrt{c}$-decay walks.}
 
 \vspace{1mm}
 \noindent
 {\bf Shuffling in $O(1)$ Rounds.}
 We note that sorting in MPC can be done in $O(1)$ rounds~\cite{DBLP:conf/isaac/GoodrichSZ11} as long as the number of items is $O(n^{\eta})$ for some constant $\eta>0$. Observe that $N=O(\frac{\log n}{\epsilon^2})$, and hence there exists $\eta$ such that $nN=O(n^{\eta})$. Hence, sorting the $nN$ walks all together can be finished in constant rounds in MPC, where the comparison of two walks is based on the comparison of the corresponding nodes in the walks. As such, all the walks sourced at the same node will be clustered.
 
 For sufficiently large $n$ we have $n^{\alpha}>2N\log_{1/\sqrt{c}}n$, and the walks sourced at the same node, occupying a space of $N\log_{1/\sqrt{c}}n$, can be held in a single machine. Since all the walks are sorted and $\frac{n^{\alpha}}{N\log_{1/\sqrt{c}}n}$ may not be an integer, each set of $N$ walks sourced at the same node may go across two machines. Let ${\bm W}_{u}$ be the set of walks sourced at node $u$. We discuss two cases to shuffle the walks of the same source node.
 
 {\bf Case 1:} If the walks in ${\bm W}_u$ are fully located in a machine, shuffling is done locally. We shuffle all such sets of walks, and for each set we assign numbers from 1 to $N$ to each walk in ${\bm W}_{u}$ after shuffling. 
 
 {\bf Case 2: }If the walks sourced at the same node are located at two consecutive machines. Without loss of generality, we denote the source node as $u$, and there are $N_0$ walks in the first machine and $N-N_0$ in the second. We then swap those $N-N_0$ walks in the second machine with the last $N-N_0$ walks sourced at node $u-1$, as illustrated in Figure~\ref{fig:shuffling}. Since $n^{\alpha}>2N\log_{1/\sqrt{c}}n$ for sufficiently large $n$, it is guaranteed that those $N-N_0$ walks sourced at $u-1$ are located in the first machine and have been shuffled locally in Case 1. Once swapped, the shuffling within ${\bm W}_u$ can be done locally. We note that by guaranteeing $n^{\alpha}>2N\log_{1/\sqrt{c}}n$ we can make sure the swaps are not conflicting and thus can be done in $1$ round.

Any set ${\bm W}_u$ can be shuffled in Case 1 or Case 2. During shuffling, each walk can be labeled by 1 to $N$ sequentially, and the $j$-th walk refers to the walk assigned with a number $j$.
\begin{figure}

\centering
\includegraphics[width=2.6in]{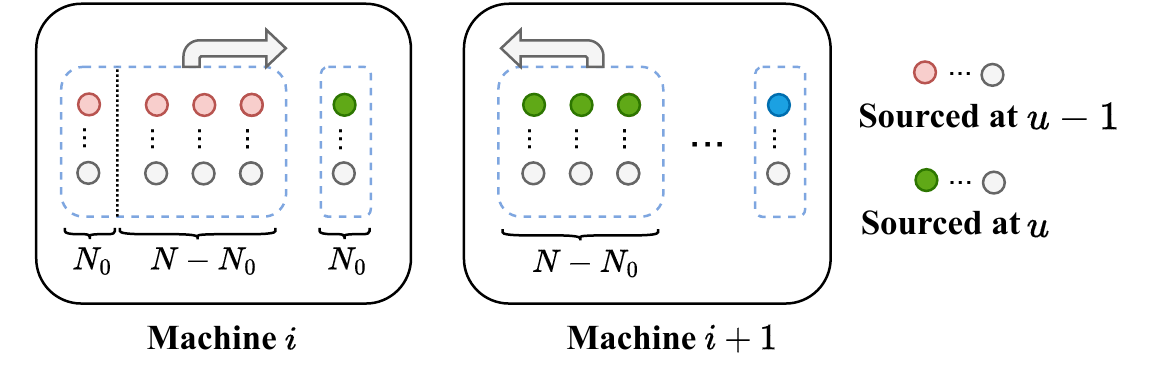}
  \caption{\label{fig:shuffling} Swapping $N-N_0$ random walks.}

\end{figure}

\subsection{Decomposing Random  Walks} \label{subsec:Decompose Random  Walks}

\noindent
{\bf Decomposition in $O(1)$ Rounds.} 
Recall that each length-$t$ walk $\{v_0, v_1, \ldots, v_{t}\}$ is to be decomposed into $t$ tuples each of which has 5 elements: $(j, i, v_i, t, v_0)$, for $1\leq i\leq t$. Tuple $(j, i, v_i, t, v_0)$ implies that in the $j$-th random walk of the walks starting from $v_0$, the $i$-th step of the walk visits node $v_i$ and the walk has overall $t$ steps.
In the MPC model, the decomposition of a random walk can be done within the machine holding the walk. While each walk will generate multiple tuples, the total space is only amplified by a constant factor (i.e., $5$ times) because each tuple corresponds to one visited node along the walk. Hence, this operation will not incur any communication among the machines, and only has a constant expansion of the space cost, which has a negligible effect due to Lemma~\ref{lemma:pre}. 
We denote the set of all the decomposed tuples by $D_G$.

   





\setlength{\textfloatsep}{0pt}

\subsection{Detecting Meeting-Walks} \label{subsec:Detect Meeting Walks}
As the decomposed tuples are stored at different machines, challenges exist if we use these tuples to detect whether some pairs of random walks meet. Algorithm \ref{alg:meeting} shows the pseudo-code of detecting walk-meetings and computing SimRank values.

\vspace{1mm}
\noindent
{\bf Sorting in $O(1)$ Rounds.} Our first step is to sort all the tuples across the machines (Line 1). The sorting is based on the {\it multi-dimensional} sorting because the tuple contains five elements, i.e., sorting is first done based on the first element, and then elements that share the same first element, sorting is based on the second element, and so on. Elements are sorted based on their node IDs. We also intentionally let the node ID of source node $s$ be \textbf{smaller} than other nodes to guarantee that walks sharing the same first three elements start with the walk sourced at $s$. 

Recall in Section~\ref{sec:overall} we show that if two walks meet, then there must be two decomposed tuples, each from one walk, sharing the first three elements. After sorting the tuples, the tuples sharing the first three elements will be clustered together and the tuple whose fifth element is $s$, if exists, will be placed in the first position in the cluster. Particularly, suppose we have tuples $(j, i, v, l_1, s)$, $(j, i, v, l_2, u_1)$, $(j, i, v, l_3, u_2)$ that share the first three elements, implying that the $j$-th walk started from $s$ meets the $j$-th walks started from $u_1$ and $u_2$. These three tuples will be clustered in the order of $(j, i, v, l_1, s)$, $(j, i, v, l_2, u_1)$, $(j, i, v, l_3, u_2)$ after sorting. The formal result is as follows.

\begin{lemma}\label{lem:predecessor}

If the $j$-th walk $W$ started from the query source node $s$ meets the $j$-th walk $W'$ started from $u$, then there must be a decomposed tuple $T$ of $W$ and a decomposed tuple $T'$ of $W'$ that share the first three elements. Furthermore, after sorting the tuples, $T$ is the closest tuple that is decomposed from $W$ and comes before $T'$.
\end{lemma}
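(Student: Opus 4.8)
\textbf{Proof plan for Lemma~\ref{lem:predecessor}.}

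The plan is to split the claim into two parts: (i) the \emph{existence} of a pair of tuples $T, T'$ decomposed from $W$ and $W'$ respectively that agree on the first three coordinates, and (ii) the \emph{adjacency} property, namely that once all tuples are sorted multi-dimensionally, $T$ is the last $W$-tuple occurring before $T'$ in the global order.

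\emph{Part (i): existence.} By Lemma~\ref{meeting prob} and the definition of ``meeting'' used throughout Section~\ref{sec:overall}, two walks meet iff there is an integer $i\ge 0$ such that the $i$-th step of $W$ and the $i$-th step of $W'$ visit the same node. Write $W=\{s=v_0,v_1,\dots,v_{l_1}\}$ and $W'=\{u=u_0,u_1,\dots,u_{l_2}\}$; since both walks carry the same shuffled index $j$, let $i$ be a step at which they coincide, so $v_i=u_i=:v$. Note $i\ge 1$: step $0$ is the source node, and $s\ne u$ is guaranteed because $u$ ranges over the other graph nodes (and anyway $s(s,s)=1$ is not estimated this way). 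Hence the decomposition rule of Section~\ref{subsec:Decompose Random  Walks} produces, from $W$, the tuple $T=(j,i,v,l_1,s)$, and from $W'$, the tuple $T'=(j,i,v,l_2,u)$. These two tuples share the first three elements $(j,i,v)$, which proves existence. (If several meeting steps exist, fix the smallest such $i$; any choice works for Part (i), but fixing one makes Part (ii) cleaner.)

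\emph{Part (ii): adjacency after sorting.} The sort key is the $5$-tuple ordered lexicographically by coordinate, with the node-ID convention that $s$ is assigned an ID strictly smaller than every other node. I would argue as follows. First, because $T=(j,i,v,l_1,s)$ and $T'=(j,i,v,l_2,u)$ agree on the first three coordinates and $s<u$, $T$ precedes $T'$ in the global order; moreover every tuple strictly between $T$ and $T'$ in the sorted order must also agree with both on $(j,i,v)$ (lexicographic sorting clusters all tuples sharing a prefix contiguously) and must have fifth coordinate in the range between $s$ and $u$ --- in particular its fifth coordinate is \emph{not} $s$, so it is not a $W$-tuple unless it is $T$ itself. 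It remains to rule out a \emph{different} tuple of $W$ landing strictly between $T$ and $T'$: any tuple of $W$ has fifth coordinate $s$ and fourth coordinate $l_1$, so among all $W$-tuples the first three coordinates determine the sort position; a $W$-tuple sitting between $T$ and $T'$ would have to have prefix exactly $(j,i,v)$ (by the clustering just noted), but $W$ visits node $v$ at step $i$ at most once with index $j$, so $T$ is the unique such $W$-tuple. Therefore no $W$-tuple lies strictly between $T$ and $T'$, i.e.\ $T$ is the closest $W$-tuple preceding $T'$.

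\emph{Main obstacle.} The routine existence argument is immediate; the delicate point is Part (ii), specifically handling the case where $W$ and $W'$ meet at \emph{several} steps, or where other walks (with the same index $j$, sourced at still other nodes) also pass through $v$ at step $i$. The clean way around this is exactly the observation made above: fix one meeting step $i$ (say the smallest), and use the fact that within the block of tuples sharing prefix $(j,i,v)$ the ones with fifth coordinate $s$ come first (at most one such, namely $T$), so nothing decomposed from $W$ can be interleaved between $T$ and the first non-$s$ tuple $T'$. I would also state explicitly that the decomposition produces exactly one tuple per (walk, step), so ``the $W$-tuple with prefix $(j,i,v)$'' is well defined --- this is what makes ``the closest'' meaningful and is the only place the single-machine local decomposition of Section~\ref{subsec:Decompose Random  Walks} is used.
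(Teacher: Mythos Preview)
Your proof is correct and follows essentially the same approach as the paper: both rely on the observation that any tuple lying between $T$ and $T'$ in the sorted order must share the prefix $(j,i,v)$, and that two distinct tuples decomposed from the same walk $W$ cannot share this prefix (since they differ in the step index), so $T$ is the unique $W$-tuple in that block. The paper packages this as a short proof by contradiction, whereas you argue directly and are more explicit about the edge cases (multiple meeting steps, other walks passing through $v$ at step $i$).
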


\noindent
{\bf Walk Paired-Up in $O(1)$ Rounds.}
To compute the SimRank between $s$ and $u_1$, we need to pair up $(j, i, v, l_1, s)$ and $(j, i, v, l_2, u_1)$; similarly, we also need to pair up $(j, i, v, l_1, s)$ and $(j, i, v, l_3, u_2)$ to compute the SimRank between $s$ and $u_2$. 
Pairing-up can be challenging because the tuples are stored at different machines, and hence communication between machines is unavoidable. In the MPC model, a machine that communicates with other machines would require the same space as the size of communication messages. To reasonably bound the communication cost between the machines, we employ the following \textit{PREDECESSOR} procedure \cite{DBLP:conf/focs/BehnezhadDELM19} to couple the tuples sharing the same first three elements, which can be finished in $O(1)$ communication rounds.


\begin{tcolorbox}
\textit{{\bf PREDECESSOR}: Considered an ordered list of tuples such that each tuple is labeled by 0 or by 1. Then, for each tuple $T^{\prime}$ labeled by 0, PREDECESSOR associates the closest tuple $T$ labeled by 1 such that $T$ comes before $T^{\prime}$ in the ordering. PREDECESSOR can be implemented in $O(1)$ MPC rounds with $n^{\alpha}$ space per machine, for any constant $\alpha>0$.}
\end{tcolorbox}

To apply \textit{PREDECESSOR} for pairing up the tuples, each machine can run a local algorithm to distinguish the tuples that are generated from the random walks starting at node $s$ and others. As shown in Figure \ref{fig:meeting}, all machines assign label $1$ to the tuples whose fifth elements are $s$ in parallel. Otherwise, the tuples are labeled by $0$. By Lemma~\ref{lem:predecessor}, we can apply the \textit{PREDECESSOR} procedure to associate the closest Label-$1$ tuple (i.e., the tuple generated from the walk sourced at $s$) to each of its following Label-$0$ tuples. Here, the {\it association} means that the machine holding the Label-$0$ tuple is aware of its closest Label-$1$ tuple before it. For each such associated pair $T$ and $T^{\prime}$, if their first three elements are the same, then the corresponding walks of $T^{\prime}$ and $T$ meet. 

\begin{algorithm}[t]
\small
\SetKwInOut{Input}{Input}\SetKwInOut{Output}{Output}
  \Input{Tuple set $D_G$ from Section~\ref{subsec:Decompose Random  Walks}; source node $s$}
  \Output{Estimated SimRank score $\tilde{s}(s, u)$ for each $u\in V$}
\emph{Sort all elements in $D_G$}\;
\emph{Calculate $z=\frac{\log{2n}}{2(\epsilon-\frac{3}{n})^2}$, $|{Z}_{u, l}| =  z\cdot (\sqrt{c})^{l}\cdot (1-\sqrt{c})$ and $|\hat{Z}_{u, l}| = \lceil |{Z}_{u, l}| \rceil$ for $l = 1,..., \log_{\frac{1}{\sqrt{c}}} n$} and $u\in V$ \;
{
\For{all tuple $(i, j, v_j, l, v)\in D_G$ in parallel}
{
\If{$v = s$}{
Label $(i, j, v_j, l, v)$ as $1$}
\Else{
Label $(i, j, v_j, l, v)$ as $0$}

}
}
{
\For{all tuple $(i, j, v_j, l_2, u)$ labeled by 0 in parallel}{
Link it to the closest tuple $(i, k, v_k, l_1, s)$ labeled by 1 using {PREDECESSOR}\;
\If{$(j=k) \wedge (v_j=v_k)$}{
$\tilde{s}(s,u)\gets \tilde{s}(s,u)+ \frac{|{Z}_{s, l_1}|\cdot |{Z}_{u, l_2}|}{|\hat{Z}_{s, l_1}|\cdot |\hat{Z}_{u, l_2}|} \cdot \frac{1}{z}$}
}
}
  \caption{Detect meeting and calculate SimRank}
  \label{alg:meeting}
\end{algorithm}

\subsection{Computing SimRanks}\label{subsec:Calculate SimRank}

The paired-up tuples in the previous step will contribute to the corresponding SimRank value. For example, suppose two walks from $s$ and $u$ meet, then this pair of walks contribute to the SimRank value $s(s,u)$, because $s(s,u)$ is estimated by the meeting probability of the walks from node $s$ and node $u$. The challenge is how to concurrently distribute and aggregate these {\it SimRank contributions} across multiple machines.

It is crucial to let those {\it SimRank contributions} corresponding to the same pair of nodes (e.g., $(s,u)$) be processed in the same machine, to avoid repeat counting of walk-meets. For example, walks $\{s, u_1, u_2, w\}$ and $\{v, u_1, u_2, k\}$ meet at both node $u_1$ and node $u_2$. To ensure accuracy, the event of each walk-meet should be counted only once. 
For this purpose, we let each machine specifically handle SimRank evaluations of $\frac{n}{M}$ nodes with respect to $s$. 
Without loss of generality, we assume the node IDs are from $1$ to $n$. Then, the $i$-th machine handles the node set with node IDs in $\left[\frac{n(i-1)}{M}+1, \frac{ni}{M}\right]$.

\vspace{1mm}
\noindent
{\bf Computing SimRanks in $O(1)$ Rounds.}
The detailed MPC operations of computing SimRanks are performed as follows (also illustrated in Algorithm~\ref{alg:meeting} Lines 8 - 11). Using the PREDECESSOR procedure, for each tuple $(i, j, v_j, l_2, u)$ with label $0$, it can be linked to the closest tuple $(i, k, v_k, l_1, s)$; if $j=k$ and $v_j=v_k$, then that indicates the meet of two walks represented by the two tuples. Then, a {\it walk-meet message} $(j, s, l_1, u, l_2)$ will be sent (by the machine holding tuple $(j, i, v, l_1, s)$) to the target machine that is responsible to computing SimRank $s(s,u)$. Here, the walk-meet message $(j, s, l_1, u, l_2)$ indicates that a length-$l_1$ walk sourced at node $s$ and a length-$l_2$ walk sourced at node $u$ meet. Also, assuming $s$, $u$, and $v$ are node IDs from $1$ to $n$, the target machine that the message is sent to is the $\left\lfloor \frac{uM}{n} \right\rfloor$-th machine because this machine will compute $s(s,u)$. Each machine may then receive multiple copies of the walk-meet message $(j,s,l_1, u,l_2)$, and only one of them will be kept because each walk-meet event should be counted only once. For each unique message $(j,s,l_1, u, l_2)$ received, we add a value $\frac{|{Z}_{s, l_1|}\cdot |{Z}_{u, l_2}|}{|\hat{Z}_{s, l_1}|\cdot |\hat{Z}_{u, l_2}|} \cdot \frac{1}{z}$ to the SimRank value $\tilde{s}(s,u)$, where $z=\frac{\log{2n}}{2(\epsilon-\frac{3}{n})^2}$ is the expected number of walk samples, $\hat{Z}_{s, l}$ denotes the set of actually generated length-$l$ random walks
sourced at $s$, and ${Z}_{s, l}$ is the expected set of length-$l$ random walks to be generated sourced at $s$. 

We distinguish actual and expected samples because ${Z}_{s, l}$ may not be an integer. Particularly, we note that $|\hat{Z}_{s, l}| =  \left\lceil z\cdot (\sqrt{c})^{l}\cdot (1-\sqrt{c})\right\rceil$, and $|{Z}_{s, l}| =   z\cdot (\sqrt{c})^{l}\cdot (1-\sqrt{c})$. When $z\cdot (\sqrt{c})^{l}\cdot (1-\sqrt{c})$ is not an integer, the expected walk set ${Z}_{s, l}$ is defined based on size-$\lceil|{Z}_{s, l}|\rceil$ set $Z_1$ and size-$\lfloor|{Z}_{s, l}|\rfloor$ set $Z_2$, such that there is probability $|{Z}_{s, l}|-\lfloor|{Z}_{s, l}|\rfloor$ that one walk of ${Z}_{s, l}$ is selected from set $Z_1$ and otherwise select from set $Z_2$. 


\vspace{1mm}
\noindent
{\bf Space Analysis for Walk-Meet Messages.} 
The walk-meet messages received by each machine are at most $\log_{{1}/{\sqrt{c}}}n$ times the number of the random walks started at the nodes handled by the machine. To see this, in the worst case, for each node of a walk $W$ generated from $u$ handled by the machine, there is a walk $W'$ from $s$ that meets $W$ at the node, creating a walk-meet message. Since each machine is responsible to compute SimRank values for $n/M$ nodes, the space per machine for receiving the message is $$O\left(\frac{n}{M}\log_{\frac{1}{\sqrt{c}}}n\cdot \frac{\log 2n}{2(\epsilon-\frac{3}{n})^2}\right)=O\left(\frac{n\log^2 n}{M(\epsilon-\frac{3}{n})^2}\right)$$ 
The total space cost for walk-meet messages is therefore $$O\left(\frac{n\log^2 n}{M(\epsilon-\frac{3}{n})^2}\cdot M\right)=O\left(\frac{n\log^2 n}{(\epsilon-\frac{3}{n})^2}\right)$$

\section{Summary of the Results}\label{sec:correctness}
\noindent
{\bf Round Complexity}. The round complexity is dominated by the generation of parallel random walks, which is $O(\log^2\log n)$ by Theorem~\ref{thm:walk}.  

\vspace{1mm}
{\noindent}
{\bf Space Complexity}. The total space cost for walk-meet messages is dominated by the random-walk generation space cost $O\left(m \log n+n^{1+o(1)} \log ^{4.5} n + \frac{n \log ^{3+o(1)} n}{2\left(\epsilon-\frac{3}{n}\right)^{2}}\right)$ (see Theorem~\ref{thm:walk}). 
We note that the total space cost is $\tilde{O}(m+n)$, indicating the existence of $M=\tilde{\Theta}(\frac{m+n}{S})$ for $S=n^{\alpha}$.

All these results give us Theorem~\ref{thm:main}, which also supports Theorem~\ref{the:contri}.

\begin{theorem}\label{thm:main}
Given a graph of $n$ nodes and $m$ edges, and a constant error $\epsilon>0$, for sufficiently large $n$, Algorithm~\ref{alg:overall} can be run in MPC using $O\left(\log^2 \log n\right)$ communication rounds, with a total space per-round as $O\left(m\log n+n^{1+o(1)}\log^{4.5} n+\frac{n\log^{3+o(1)}n}{2(\epsilon-\frac{3}{n})^2}\right)$. The space per-machine needed is $S=n^{\alpha}$ such that $n^{\alpha}>{\log^5 n}/({\log\frac{1}{\sqrt{c}}})$. Algorithm~\ref{alg:overall} is an imperfect sampler that fails with probability at most $O\left(n^{-1}\right)$.
\end{theorem}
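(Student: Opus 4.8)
The plan is to prove Theorem~\ref{thm:main} by composing the per-stage guarantees established in Sections~\ref{sec:routing} and~\ref{sec:MPC_analysis}: walk through the stages of Algorithm~\ref{alg:overall} in order, add up their round costs, take the dominating term for the space cost, and union-bound the sampler failures. Separately, the $\epsilon$-accuracy that lets Theorem~\ref{thm:main} cross over to Theorem~\ref{the:contri} follows from the correctness argument (Section~\ref{sec:correctness}) that controls the two approximations introduced by our reinterpretation---truncating $\sqrt c$-decay walks at length $l=\log_{1/\sqrt c}n$ and rounding the per-length sample counts to integers.

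\textbf{Rounds.} Stage by stage: storing $G$ across machines and reversing its edges (Algorithm~\ref{alg:overall}, Lines~1--2) take $O(1)$ rounds, the reversal being purely local. Parallel random-walk generation (Lines~4--7) is the only super-constant stage: running $\log_{1/\sqrt c}n$ instances of WRME concurrently, one per length class with per-task space $n^{\beta}=n^{\alpha}/\log_{1/\sqrt c}n$, costs $O(\log^2\log n+\log^2 l)$ rounds by Theorem~\ref{stoc}; since $l=\log_{1/\sqrt c}n=\Theta(\log n)$ we have $\log^2 l=\Theta(\log^2\log n)$, so this is $O(\log^2\log n)$, as recorded in Theorem~\ref{thm:walk}. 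Shuffling is $O(1)$ rounds (an $O(1)$-round MPC sort of the $nN=O(n^{\eta})$ walks, local shuffles, and one round of non-conflicting swaps, Section~\ref{subsec:Shuffle Random Walks}), and decomposition is local. Finally, the meeting detection of Algorithm~\ref{alg:meeting}---a multi-dimensional $O(1)$-round sort, a local relabelling, one call to \textit{PREDECESSOR}, and one round routing walk-meet messages to the machines responsible for the relevant node pairs---is $O(1)$ rounds. Summing gives $O(\log^2\log n)$.

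\textbf{Space and existence of an MPC instance.} The random-walk generation dominates, so I would sum the Theorem~\ref{stoc} total-space bound $O(m+n^{1+o(1)}i^{3.5}+N_i\,n\,i^{2+o(1)})$ over the length classes $i=0,\dots,l$: the first term gives $O(m\log n)$, the second gives $O(n^{1+o(1)}\sum_{i\le l}i^{3.5})=O(n^{1+o(1)}\log^{4.5}n)$, and the third gives $O(n\log^{2+o(1)}n\cdot\sum_i N_i)=O\left(\frac{n\log^{3+o(1)}n}{2(\epsilon-3/n)^2}\right)$ using $\sum_i N_i=\Theta(z)$ with $z=\frac{\log 2n}{2(\epsilon-3/n)^2}$; this is exactly the bound of Theorem~\ref{thm:walk}. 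Decomposition only multiplies space by a constant, and by Section~\ref{subsec:Calculate SimRank} the walk-meet messages occupy total space $O\left(\frac{n\log^2 n}{(\epsilon-3/n)^2}\right)$, which is dominated; the $O(1)$-round sorts involve only $O(n^{\eta})$ items and fit. Hence the total per-round space matches the claimed expression, is $\tilde O(m+n)$, and a valid $M=\tilde\Theta((m+n)/S)$ exists. For the per-machine budget $S=n^{\alpha}$, the binding constraint is the one already isolated in Theorem~\ref{thm:walk}: the concurrent-instance split needs $l=o(n^{\beta})/\log^3 n$ with $l=\log_{1/\sqrt c}n$ and $n^{\beta}=n^{\alpha}/\log_{1/\sqrt c}n$, which reduces to $n^{\alpha}>\log^5 n/\log\frac1{\sqrt c}$; this same $n^{\alpha}$ comfortably covers the shuffling requirement $n^{\alpha}>2N\log_{1/\sqrt c}n=\Theta(\log^2 n)$, and the per-machine walk-meet load $O\left(\frac{n\log^2 n}{M(\epsilon-3/n)^2}\right)$ is brought down to $O(n^{\alpha})$ by the polylog slack in $M$. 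Constant-factor blow-ups are absorbed by Lemma~\ref{lemma:pre}.

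\textbf{Failure probability, and the hard step.} WRME is an imperfect sampler that fails with probability $O(n^{-1})$ per instance; we invoke it $O(\log n)$ times, and a union bound---amplifying each instance's success probability on an inflated internal parameter to get a clean $O(n^{-1})$---keeps the overall failure at $O(n^{-1})$, which completes Theorem~\ref{thm:main}. To reach the $\epsilon$-error guarantee of Theorem~\ref{the:contri} on top of this, the correctness argument shows: (i) truncating $\sqrt c$-decay walks at $l=\log_{1/\sqrt c}n$ perturbs the meeting probability of Lemma~\ref{meeting prob} by at most $O(1/n)$, since a walk survives $l$ steps with probability $(\sqrt c)^l=1/n$; (ii) applying the factor $\frac{|Z_{s,l_1}|\,|Z_{u,l_2}|}{|\hat Z_{s,l_1}|\,|\hat Z_{u,l_2}|}\cdot\frac1z$ per length-pair makes the count of matched tuples an (essentially) unbiased estimator of the truncated SimRank, reconciling the integer $|\hat Z_{u,l}|=\lceil z(\sqrt c)^l(1-\sqrt c)\rceil$ with the possibly fractional $|Z_{u,l}|$ via the stated mixture of a $\lceil|Z_{u,l}|\rceil$-set and a $\lfloor|Z_{u,l}|\rfloor$-set; and (iii) Hoeffding's inequality over the $z$ effective samples plus a union bound over the $n$ target nodes bounds the deviation by $\epsilon-3/n$ with high probability, so the total error is $(\epsilon-3/n)+3/n=\epsilon$, the $3/n$ absorbing the truncation-and-rounding slack. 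I expect step~(ii) to be the real obstacle---pinning down the exact (un)biasedness of the reweighted matched-pair count against the truncated meeting probability, handling the randomized fractional sample sets, and confirming the combined perturbation never exceeds $3/n$; the round and space accounting above is otherwise a routine composition of results already in hand (Theorems~\ref{stoc} and~\ref{thm:walk}, Lemmas~\ref{lemma:pre}, \ref{meeting prob}, \ref{lem:predecessor}, and the $O(1)$-round MPC sorting and \textit{PREDECESSOR} primitives).
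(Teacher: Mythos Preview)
Your proposal is correct and follows essentially the same approach as the paper: the paper's own proof of Theorem~\ref{thm:main} is the one-liner ``immediate from Theorem~\ref{thm:walk} and all the analysis results from Section~\ref{sec:MPC_analysis}'', and your stage-by-stage accounting of rounds, total space, per-machine space, and the union bound over $O(\log n)$ WRME invocations is exactly that composition spelled out in detail. One minor note: the accuracy discussion you append (steps (i)--(iii)) is really the content of Theorem~\ref{thm:error}, not Theorem~\ref{thm:main}, and you correctly flag it as separate; it is not needed for the statement you were asked to prove.
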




\vspace{1mm}
\noindent
{\bf Accuracy Analysis}. The accuracy guarantee of our random-walk sampling techniques resembles that of a simple Monte Carlo method, but we also incorporate the techniques of length-truncation and the rounding of the number of walks. These brings sophistication in the proof which we leave to the Appendix, and only include the main result as follows. We note that the threshold $\frac{3}{n}$ in the result can be further reduced (see Appendix).
\begin{theorem}\label{thm:error}
Algorithm~\ref{alg:overall} outputs SimRank values $\tilde{s}(s,u)$ with error at most $\epsilon$ ($\epsilon \ge \frac{3}{n}$) and with probability at least $1-\frac{1}{n}$.
\end{theorem}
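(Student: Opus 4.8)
The plan is to split $|\tilde s(s,u)-s(s,u)|$, via the triangle inequality, into a \emph{deterministic bias} $|\mathbb E[\tilde s(s,u)]-s(s,u)|$ and a \emph{random fluctuation} $|\tilde s(s,u)-\mathbb E[\tilde s(s,u)]|$, to show the bias is at most $\tfrac3n$ and the fluctuation is at most $\epsilon-\tfrac3n$ with probability $1-O(n^{-2})$ for each target node, and to finish with a union bound over the $n$ target nodes. Throughout I condition on the high-probability event that the random-walk generation of Theorem~\ref{thm:walk} succeeds.

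\emph{The bias.} By Lemma~\ref{meeting prob}, $s(s,u)=\sum_{l_1,l_2\ge0}\rho_{l_1}\rho_{l_2}\,q(l_1,l_2)$, where $\rho_i=(\sqrt c)^i(1-\sqrt c)$ is the probability that a $\sqrt c$-decay walk has length exactly $i$ and $q(l_1,l_2)$ is the probability that a uniform length-$l_1$ walk from $s$ and a uniform length-$l_2$ walk from $u$ visit a common node at a common step. First, truncating at $l=\log_{1/\sqrt c}n$ only drops meeting events in which some walk exceeds length $l$, an event of probability $(\sqrt c)^{l+1}=\sqrt c/n$ per walk, so the truncated quantity $p_{\le l}(s,u)=\sum_{l_1,l_2\le l}\rho_{l_1}\rho_{l_2}q(l_1,l_2)$ satisfies $0\le s(s,u)-p_{\le l}(s,u)\le 2(\sqrt c)^{l+1}<\tfrac2n$. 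Second, the algorithm generates $N_i=\lceil z\rho_i\rceil$ length-$i$ walks per node ($z=\tfrac{\log 2n}{2(\epsilon-3/n)^2}$), shuffles the walks of $s$ and of each $u$ with independent uniform permutations, and for $j=1,\dots,\hat N$ (where $\hat N=\sum_{i\le l}N_i$) adds $\tfrac1z\cdot\tfrac{z\rho_{l_1}}{\lceil z\rho_{l_1}\rceil}\cdot\tfrac{z\rho_{l_2}}{\lceil z\rho_{l_2}\rceil}$ whenever the $j$-th walk of $s$ (length $l_1$) meets the $j$-th walk of $u$ (length $l_2$). Since after shuffling the $j$-th walk of a node is marginally a uniform length-$i$ walk with probability $N_i/\hat N$, and the factor $\tfrac{z\rho_i}{\lceil z\rho_i\rceil}\le1$ exactly undoes the over-representation of each length class, computing $\mathbb E[\tilde s(s,u)]$ over the walk randomness and the two shuffles shows it equals $p_{\le l}(s,u)$ up to the remaining normalization mismatch between $z$ and the realized count $\hat N$; using $z(1-(\sqrt c)^{l+1})\le\hat N$ together with the $\lceil\cdot\rceil/\lfloor\cdot\rfloor$ coupling between the generated sets $\hat Z_{s,i}$ and the idealized expected sets $Z_{s,i}$ of Section~\ref{subsec:Calculate SimRank}, one controls this mismatch so that $|\mathbb E[\tilde s(s,u)]-p_{\le l}(s,u)|=O(1/n)$. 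Altogether $|\mathbb E[\tilde s(s,u)]-s(s,u)|\le\tfrac3n$, and the paper's remark that $\tfrac3n$ can be reduced reflects slack in these estimates.

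\emph{Fluctuation and union bound.} Write $\tilde s(s,u)=\tfrac1z\sum_{j=1}^{\hat N}W_j$ with $W_j\in[0,1]$ the reweighted meeting indicator of the $j$-th matched pair. Conditioning on the pair of lengths assigned to each of the $\hat N$ matched pairs---which across these pairs is sampling without replacement from fixed length-multisets, hence shares its one-dimensional marginals with i.i.d.\ draws from the empirical length distributions---the $W_j$ become conditionally independent and $[0,1]$-bounded, so Hoeffding's inequality applied conditionally, and then Hoeffding's sampling-without-replacement bound used to integrate out the lengths, yields $\Pr[\,|\tilde s(s,u)-\mathbb E[\tilde s(s,u)]|>\epsilon-\tfrac3n\,]\le 2\exp\!\big(-\Theta\big(z(\epsilon-\tfrac3n)^2\big)\big)=O(n^{-2})$ for the stated $z$. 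A union bound over the $n$ target nodes makes all estimates lie within $\epsilon-\tfrac3n$ of their means with probability $\ge1-\tfrac1n$; combined with the bias bound, $|\tilde s(s,u)-s(s,u)|\le(\epsilon-\tfrac3n)+\tfrac3n=\epsilon$ for every $u$, as claimed.

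\emph{Main obstacle.} The crux is the rounding analysis together with the concentration step. Because the $j$-th walks of $s$ and $u$ are paired through two \emph{independent uniform shuffles} of prescribed-length multisets and each summand carries the length-dependent weight $\tfrac{z\rho_{l_1}}{\lceil z\rho_{l_1}\rceil}\cdot\tfrac{z\rho_{l_2}}{\lceil z\rho_{l_2}\rceil}$, the summands are neither independent nor identically distributed, so a plain Hoeffding bound does not apply verbatim; one must pass through the conditional-independence decomposition above (or a bounded-differences/permutation argument) with a carefully controlled constant so that the chosen sample count $z$ still delivers the required tail. The other delicate point is certifying that the ceilings on the $N_i$, the reweighting, and the randomized definition of the expected walk sets jointly perturb $\mathbb E[\tilde s(s,u)]$ by only $O(1/n)$, i.e.\ that the $z$-versus-$\hat N$ normalization gap is absorbed rather than growing to a constant fraction of $s(s,u)$.
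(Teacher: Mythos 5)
Your overall route --- a $\tfrac{3}{n}$ truncation-plus-rounding slack combined with a Hoeffding-type concentration at level $\epsilon-\tfrac{3}{n}$, with the reweighting factor $\frac{|Z_{s,l_1}|\,|Z_{u,l_2}|}{|\hat Z_{s,l_1}|\,|\hat Z_{u,l_2}|}\cdot\frac{1}{z}$ absorbing the ceilings on the $N_i$ --- is essentially the paper's own proof (Appendix~\ref{appen:theo4 proof} together with Lemma~\ref{lem:meeting score}). The one concrete gap is quantitative: with the algorithm's actual sample count $z=\frac{\log 2n}{2(\epsilon-3/n)^2}$, Hoeffding gives a \emph{per-node} tail of $2e^{-2z(\epsilon-3/n)^2}=2e^{-\log 2n}=\frac{1}{n}$, not the $O(n^{-2})$ you claim, so your union bound over the $n$ target nodes produces a failure probability of order $1$ and proves nothing. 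The paper's Theorem~\ref{thm:error} is in fact a per-pair guarantee --- each individual estimate $\tilde s(s,u)$ has error at most $\epsilon$ with probability $1-\frac{1}{n}$, and no union bound is taken; to obtain the simultaneous guarantee you are after one must inflate the sample count to $z=\frac{\log 2n^2}{2(\epsilon-3/n)^2}$, which is exactly the $q=2$ case of the paper's ``reduce the failure probability'' discussion. So either drop the union bound and prove the per-pair statement, or note explicitly that the stated $z$ must be increased by the factor corresponding to $\log 2n^2$.

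That said, your treatment of the dependence structure is a point where you are \emph{more} careful than the paper: Equation~\ref{equ:hoeff} applies Hoeffding to the $z$ matched pairs as if they were i.i.d.\ pairs of $\sqrt c$-decay walks, silently identifying ``shuffle a fixed multiset of prescribed lengths and match by index'' with ``draw lengths i.i.d.\ from the geometric distribution.'' Your conditional-independence / sampling-without-replacement decomposition is the right way to justify that identification, and your bias-versus-fluctuation split makes explicit the $z$-versus-$\hat N$ normalization mismatch that the paper folds into the $\tfrac{3}{n}$ term without comment. Neither of these refinements changes the conclusion, but they do need to be carried out with explicit constants if one wants a fully rigorous version of the argument.
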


\bibliography{MPCSimRank}
\bibliographystyle{unsrt}

\newpage
\appendix

\section{Missing Complexity Analysis}\label{appen:baseline note}
In Table \ref{tab:complexity}, we explore the literature and provide analytical results regarding space cost and round complexity. We note that all these three algorithms aim to sample random walks in parallel for calculating the endpoint distribution (CloudWalker) or the meeting probability of two paths (UniWalk and DISK). Following existing analysis in UniWalk~\cite{DBLP:journals/tkde/SongLGZWY18}, we assume one random walk step in a distributed system costs one communication round.

\subsection{CloudWalker~\cite{DBLP:journals/pvldb/LiFLCCL15}} CloudWalker uses the following definition to compute SimRank:
\begin{align}\label{equ:truncatedD}
   {\bm S}=c{\bm P}^{\top}{\bm S}{\bm P}+{\bm D}=\sum_{t=0}^{\infty} c^t {\bm P}^{\top t} {\bm D} {\bm P}^t, 
\end{align}
where ${\bm S}$ is the SimRank matrix that ${\bm S}(i,j)$ is SimRank score between the $i$-th node and $j$-th node, ${\bm P}$ is the transition matrix of $G^T$ which is a transpose of the input graph $G$, and ${\bm D}$ is the diagonal correction matrix that CloudWalker aims to compute first. The method of CloudWalker mainly includes the \textit{calculation of ${\bm D}$} and \textit{calculation of single-source SimRank}.

\noindent
\textbf{Calculation of ${\bm D}$.} CloudWalker relies on a method that combines the Jacobi method and Monte Carlo method to calculate the diagonal correction matrix ${\bm D}$ whose each row corresponds to the correction values of one node to the other nodes. 
Under the assumption that the input graph cannot be fit in a single machine ($S=n^{\alpha}$, $0<\alpha<1$),  CloudWalker cannot process all nodes at a time and adopts the distributed structure to store the data for each node. Particularly, each machine will process $b$ nodes at a time and $b$ is assigned a node number to be processed. As a result, CloudWalker needs to perform its algorithms for $\frac{n}{b}$ times to process all nodes.
We note that in the process of estimating {${\bm D}$}, CloudWalker needs to sample $N = {(\log n)}/{\epsilon_p^2}$ length-$l$ random walks for each node, where $\epsilon_p$ is a parameter that controls the accuracy. When sampling $N$ random walks, as we assume machine space $S<n$, each machine may need communication with other machines to access the adjacent nodes which might be stored in other machines. For example, assuming that $u\to v$ is an edge in $E$ and $u$, $v$ are stored in two different machines, a walk-step $u\to v$ requires one communication round. 
Therefore sampling a length-$l$ random walk needs $O(l)$ communication rounds. After running $I$ iterations for ${n}/{b}$ times, the overall communication rounds of CloudWalker is $O({nIl}/{b})$.

Then in the Jacobi method, CloudWalker needs $O(blN)$ space to store the walk state matrix \textbf{$A$} in every single machine. In addition, they need to store all the nodes and edges, incurring $O(m+n)$ costs. Putting together, the space cost per round is $O\left({(bl\log n)}/{\epsilon_p^{2}}+n+m\right)$. Furthermore, each machine takes $O\left({(bl\log n)}/{\epsilon_p^{2}}\right)$ space to store $A$.
Assuming the graph is equally distributed across the machines, then the per-machine space is $O\left({(bl\log n)}/{\epsilon_p^{2}}+(m+n)/M\right)$.

\noindent
\textbf{Calculation of single-source SimRank.}
In Algorithm 4 of \cite{DBLP:journals/pvldb/LiFLCCL15}, CloudWalker conducts the computation of Single-source SimRank based on the obtained diagonal correction matrix ${\bm D}$. In Lines 4-13, CloudWalker again utilizes the random walks to compute the vectors corresponding to SimRank results, which needs to consume $O(l^2)$ communication rounds. The rationale of the communication round consumption is also that walk sampling requires to access other nodes which may be resided on multiple machines. Similar to the computation of ${\bm D}$, $b$ nodes will be processed at a time and the number of overall communication rounds is $O(\frac{nl^2}{b})$. For the reason that $O(bl^2N)$ space needs to be used for storing the computation vectors related to random walks, the total space complexity is $O\left({(bl^2\log n)}/{\epsilon_p^{2}}+n+m\right)$.

To conclude, CloudWalker needs $O({nl(I+l)}/{b})$ communication rounds and $O\left({(bl^2\log n)}/{\epsilon_p^{2}}+n+m\right)$ total space complexity, where each machine requires $O\left({(bl^2\log n)}/{\epsilon_p^{2}}+(m+n)/M\right)$ space. Furthermore, CloudWalker needs to make a trade-off between the per-machine-space memory cost and the number of communication rounds. A special case is when $b$ equals $n$, CloudWalker can finish the SimRank within $O({Il+l^2})$ rounds where each machine requires  $O(n \log n)$ space. Finally, CloudWalker only provides the error bound when estimating {${\bm D}$} and truncating the length $l$; they did not show any accuracy guarantees for computing SimRank scores \cite{DBLP:journals/pvldb/0012XFC00M20}. 

\subsection{UniWalk~\cite{DBLP:journals/tkde/SongLGZWY18}} The application scope of UniWalk is {\it undirected graph}. To compute the SimRank score $s(u,v)$, Uniwalk adopts the Monte Carlo method by sampling a certain number of Length-$L$ random walks and calculating the expected meeting distance of two bidirectional Length-$L$ random walks from two nodes $u$ and $v$.
The main contribution of UniWalk is to convert sampling $O(nN)$ bidirectional Length-$L$ walks to only $O(N)$ Length-$2L$ unidirectional walks. 
A rectified factor is introduced to make the two kinds of walks equivalent. $L$ is empirically set in UniWalk, and UniWalk takes $l=2L$ rounds to generate length-$2L$ unidirectional walks. By sampling $N = \frac{c^2(1-c^L)^2 d_{\text{max}}^2 \log 2n}{2\epsilon^2 (1-c)^2 d_{\text{min}}^2}$ random walks, where $d_{\text{max}}$ and $d_{\text{min}}$ are the maximum and minimum degree in $G$ respectively. UniWalk can guarantee with high probability that the computed SimRank values 
have at most $\epsilon$ error. 
In the distributed version of UniWalk (Algorithm 3 in~\cite{DBLP:journals/tkde/SongLGZWY18}), to guarantee $O(l)$ communication rounds, we need to apply Algorithm 3 $N$ times in parallel. Then in the worst case, Line 21 generates $N$ iterations and Line 23 needs to send a message of size $O(l)$. We note that $O(Nl)$ is already the worst-case total message space cost generated during the UniWalk process. 
Hence, both the per-machine space cost and total space cost are $O(Nl) = O(\frac{n^2 l\log n}{\epsilon^2})$ because $d_{max} = O(n)$ and $d_{min} \ge 1$.

\subsection{DISK~\cite{DBLP:journals/pvldb/0012XFC00M20}} Similar to CloudWalker, DISK also utilizes random walks to estimate the diagonal correction matrix {${\bm D}$} and then performs the single-source SimRank computation. We also demonstrate the processes from the perspective of the \textbf{calculation of} {${\bm D}$} and \textbf{calculation of single-source SimRank}.

\noindent
\textbf{Calculation of {${\bm D}$}.}
The following analysis is based on the results in Section~4 of~\cite{DBLP:journals/pvldb/0012XFC00M20}. DISK explores the physical meaning of {${\bm D}$} and conducts a more efficient algorithm to estimate the diagonal correction matrix {${\bm D}$}. Particularly, it transfers the calculation of {${\bm D}$} into the meeting probability of two $\sqrt{c}$-walks starting from each node $v\in V$. The main advantage of this method is that the estimation accuracy of {${\bm D}$} can be guaranteed. Then DISK build a tree-based method to sample $N =O(\frac{\log n}{\epsilon_d^{2}})$ trees when we set the failure probability $\delta_d = \frac{1}{n}$ for each node $v\in V$. The trees can also be interpreted as the $\sqrt{c}$-walks.
Since the height of the trees is expected to be $O(\log n)$ and each level of one tree will require one communication round, the overall communication rounds after building $N$ trees can be $O(\frac{\log^2 n}{\epsilon_d^{2}})$. For the space cost, DISK needs to save the entire tree (random walk), which is $O(\log n)$, and the space per machine can be $O(\frac{m\log n}{M})$.

\noindent
\textbf{Calculation of single-source SimRank.}
The final single-source computation is based on the truncated formulation of Equation \ref{equ:truncatedD} as:
\begin{align}\label{equ:trauncatedD2}
    \textbf{S}\approx\sum_{t=0}^{K} c^t {\bm P}^{\top t} {\bm D} {\bm P}^t,
\end{align}
where $K$ notes the truncated terms for approximation.
In order to obtain the final SimRank score, $O(K)$ communication rounds will be consumed to calculate the result of Equation \ref{equ:trauncatedD2}. For the reason that each node also needs to maintain $O(K)$ space for the above equation, the space for each machine will be $O(\frac{Kn}{M})$.

Hence we summarize that the number of overall communication rounds for DISK is $O\left(\frac{\log^2 n}{ \epsilon_d^{2}}+K\right)$. Then in the space analysis, DISK needs total space $O(m\log n +Kn)$ and space $O(\frac{m\log n+Kn}{M} )$ per machine to store the trees. The estimation error of DISK is controlled as smaller than $\frac{c\left(1-c^{K}\right) \epsilon_d}{1-c}+c^{K+1}$ with high probability, which corresponds to our statistic in Table \ref{tab:complexity}.

\section{Missing Proofs}
\subsection{Proof of Theorem \ref{thm:walk}}\label{appen:walk proof}
\begin{proof}
For any instance $n$, let $\beta$ satisfy that $n^{\beta}=n^{\alpha}/ \log_{\frac{1}{\sqrt{c}}}n$. If $n^{\beta}>{(\log^3n \cdot \log_{\frac{1}{\sqrt{c}}}n)}$ (equivalently, $n^{\alpha}>{\log^4 n\log _{\frac{1}{\sqrt{c}}}n}$=$\frac{\log^5 n}{\log\frac{1}{\sqrt{c}}}$), then $\log_{\frac{1}{\sqrt{c}}}n=o(S)/\log^3 n$, and hence Theorem~\ref{stoc} is applicable for $S=n^{\beta}$. Applying Theorem~\ref{stoc} with $S=n^{\beta}$ still gives us $O(\log^2\log n+\log^2l)$ round complexity and $O(m+n^{1+o(1)}l^{3.5}+Nnl^{2+o(1)})$ space because the two complexities hides the $\alpha$ (or $\beta$) factor. Also, when applying Theorem~\ref{stoc} with $S=n^{\beta}$, the number of machines $M$ satisfies that $M\cdot n^{\beta}=\tilde{\Theta}(m+n)$ by the requirement of the MPC model. Then, $n^{\alpha}\cdot M=\tilde{\Theta}(m+n)$ also holds, which means the number of machines used is desired for MPC when $=n^{\alpha}$. Now, for each of the $M$ machines each having space $n^{\alpha}$, we divide the machine space into $\log_{\frac{1}{\sqrt{c}}n}$ subspaces, and for each subspace we apply WRME algorithms (Theorem~\ref{stoc}) in parallel, respectively for $l=1, 2, \ldots, \log_{\frac{1}{\sqrt{c}}n}$. 

This indicates that each WRME algorithm runs with $S=n^{\beta}$. The worst-case round complexity corresponds to the WREM algorithm running with the longest length, which is $\log_{\frac{1}{\sqrt{c}}} n$. By Theorem~\ref{stoc}, we have the round complexity 
\begin{align}
O\left(\left\lceil\log ^{2} \log n+\log ^{2} l \right\rceil\right)   = & O\left(\left\lceil\log ^{2} \log n+\log ^{2} \log_{\frac{1}{\sqrt{c}}} n \right\rceil\right) \nonumber \\=  &O\left(\log^2\log n\right)
\end{align}
The last step holds because $c$ is a constant.
For each WRME instance, a machine space $n^{\beta}$ is sufficient. Considering all $\log_{\frac{1}{\sqrt{c}}n}$ WRME instances run in parallel, the space per machine is $n^{\beta}\cdot \log_{\frac{1}{\sqrt{c}}n}=n^{\alpha}$.

For the total space cost, we need to sum up all the space costs for each call of the WRME algorithm. Therefore, the space bound is calculated by applying Theorem~\ref{stoc} for $1\leq l \leq \log_{\frac{1}{\sqrt{c}}} n$, and for each $l$, the space cost is expanded $\frac{\log{2n}}{2\left(\epsilon-\frac{3}{n}\right)^2}\cdot (\sqrt{c})^{l}\cdot (1-\sqrt{c})$ times (the number of random walks):
\hspace{-8mm}
{\small
\begin{align}\label{equ:walk cost}
    &\sum_{l=1}^{\log_{\frac{1}{\sqrt{c}}} n } O\left(m+n^{1+o(1)} l^{3.5}+N_l n l^{2+o(1)}\right) \nonumber \\
    = & \sum_{l=1}^{\log_{\frac{1}{\sqrt{c}}} n} 
    O\left(m+n^{1+o(1)} l^{3.5}+\right. \left.\frac{\log{2n}}{2(\epsilon-\frac{3}{n})^2}\cdot (\sqrt{c})^{l}\cdot (1-\sqrt{c}) n l^{2+o(1)}\right)
    \nonumber\\
    = & O\left(m\log n+n^{1+o(1)}\log^{4.5} n+\frac{n\log^{3+o(1)}n}{2(\epsilon-\frac{3}{n})^2}\right)
\end{align}
}
The last step holds because $\sum_{l=1}^{\log_{\frac{1}{\sqrt{c}}}n}(\sqrt{c})^{l}(1-\sqrt{c})< 1$. 
The space cost per machine is expanded at most $\log_{\frac{1}{\sqrt{c}}}n$ compared with the per-machine cost needed for Theorem~\ref{stoc}, and thus still being strongly sub-linear to $n$.
\end{proof}

\subsection{Proof of Lemma~\ref{lemma:pre}}

\begin{proof}
Consider that there exists a distributed algorithm $\mathcal{A}$ that works on $M=\tilde{O}(\frac{m+n}{n^{\alpha}})$ machines with per-machine space $S={\Theta}(n^{\alpha})$ for $0<\alpha < 1$ using $O(g(\alpha)\cdot f(n))$ communication rounds, where $g(\alpha)$ is a constant function that is computed based on $\alpha$. 
By the definition of big-$O$ notation, there must exist constants $C$, $C'$ and $n_0$, such that for any $n'>n_0$, the per-machine space $S$ used in $\mathcal{A}$ is at most $C\cdot n'^{\alpha}$, and the number of rounds used in $\mathcal{A}$ is at most $C'\cdot g(\alpha)\cdot f(n')$. 

Let $\beta=\log_{n'}{(C\cdot n'^{\alpha})}$, and we have $n'^{\beta}=C\cdot n'^{\alpha}$. Hence, Algorithm $\mathcal{A}$ is conducted on machines with space $n'^{\beta}$ using at most $C'\cdot g(\alpha)\cdot f(n')$ 
rounds. The number of machines $M=\tilde{O}(\frac{m+n}{n'^{\alpha}})=\tilde{O}(\frac{m+n}{C\cdot n'^{\alpha}})=\tilde{O}(\frac{m+n}{n'^{\beta}})$. Hence, Algorithm $\mathcal{A}$ is a valid MPC algorithm for $S=n'^{\beta}$ when the number of graph nodes is $n'$, and the number of rounds used is a constant factor multiplied by $f(n')$. 

We fix $\beta$ and let $M^*$ be the maximum number of machines used considering all $n'>n_0$, and it is easy to see $M=\tilde{O}(\frac{m+n}{n^{\beta}})$.

Meanwhile, note that $\beta=\alpha+\log_{n'}(C')$. When $n'$ is sufficiently large, $\beta$ can be arbitrarily close to $\alpha$. Therefore, for any $\beta \in (0,1)$ we can always find $\alpha = \beta-\log_{n'}(C')>0$ for sufficiently large $n'$, and hence, we can construct the corresponding MPC algorithm $\mathcal{A}$. 

We can then safely replace symbol $\beta$ with $\alpha$ and the lemma holds. 
\end{proof}

\subsection{Proof of Lemma \ref{lem:predecessor}}\label{appen:lemma2 proof}

\begin{proof}
Suppose Walk $W$ and Walk $W'$ meet at their $i$-th step at node $v$, then there is a decomposed tuple $T=(j, i, v, l_1, s)$ from $W$, and a decomposed tuple $T'=(j, i, v, l_2, u)$ from $W'$. They share the first three elements. Furthermore, we prove by contraction that there is a decomposed tuple $T^*$ originated from $W$ but $T^*$ is closer to $T'$ in the sorting order. Since the first three elements of $T$ and $T'$ are the same, then $T^*$ must also have the same first three elements. However, both $T$ and $T^*$ are decomposed from the walk $W$, but any two tuples decomposed from the same walk must have different elements in the first three positions of the tuple. Contradiction ensues.
\end{proof}

\subsection{Proof of Theorem~\ref{thm:main}}
\begin{proof}
The proof is immediate from Theorem~\ref{thm:walk} and all the analysis results from Section~\ref{sec:MPC_analysis}.
\end{proof}

\subsection{Proof of Theorem \ref{thm:error}}\label{appen:theo4 proof}
We aim to show that our algorithm can obtain the SimRank values with at most $\epsilon$ absolute error compared with the true value. Recall that we sample $N_l = \left\lceil \frac{\log{2n}}{2\left(\epsilon-\frac{3}{n}\right)^2}\cdot (\sqrt{c})^{l}\cdot (1-\sqrt{c})\right\rceil$ length-$l$ ($ 1\leq l \leq\log_{\frac{1}{\sqrt{c}}}$) walks for each node. We will utilize two subsections to demonstrate how we bound the estimation error.

\subsubsection{Controlling Truncated Error}
Based on the Monte Carlo method, we will show that checking the meeting probability by sampling truncated $\sqrt{c}$-decay walks from $s$ and $u$ gives an $\epsilon$ ($\epsilon \ge \frac{3}{n}$) absolute error. The derivation is based on the Hoeffding Inequality as follows:
\begin{lemma}
Let $X_1, \ldots, X_z$ be independent bounded random variables with $X_i \in [0, 1]$ for all $i$. Then

\begin{align}
    {\bm\Pr}\left[\left|\frac{1}{z} \sum_{i=1}^z\left(X_i-E\left[X_i\right]\right)\right| \geq \lambda\right] \leq 2e^{\left(-2 z \lambda^2\right)}.
\end{align}

\end{lemma}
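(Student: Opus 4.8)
The plan is to prove this two-sided tail bound by the standard Chernoff (exponential-Markov) method combined with Hoeffding's lemma on the moment generating function of a bounded, centered random variable. First I would reduce the two-sided statement to a one-sided one. Writing $S_z = \sum_{i=1}^z (X_i - \mathbb{E}[X_i])$, a union bound gives $\Pr[|S_z| \geq z\lambda] \leq \Pr[S_z \geq z\lambda] + \Pr[-S_z \geq z\lambda]$. The lower tail is handled by applying the identical argument to the variables $X_i' = 1 - X_i$, which also lie in $[0,1]$ and satisfy $\sum_i (X_i' - \mathbb{E}[X_i']) = -S_z$; this symmetry is exactly where the factor of $2$ in the final bound originates. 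So it suffices to establish $\Pr[S_z \geq z\lambda] \leq e^{-2z\lambda^2}$.

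For the one-sided bound I would apply the exponential Markov inequality: for any $t>0$,
\begin{align}
\Pr[S_z \geq z\lambda] = \Pr\!\left[e^{tS_z} \geq e^{tz\lambda}\right] \leq e^{-tz\lambda}\, \mathbb{E}\!\left[e^{tS_z}\right] = e^{-tz\lambda} \prod_{i=1}^z \mathbb{E}\!\left[e^{t(X_i - \mathbb{E}[X_i])}\right],
\end{align}
where the last equality uses independence of the $X_i$ to factor the expectation of the product over the individual factors.

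The key technical step, and the main obstacle, is Hoeffding's lemma: for any random variable $Y$ supported on $[a,b]$ with $\mathbb{E}[Y]=0$, one has $\mathbb{E}[e^{tY}] \leq e^{t^2(b-a)^2/8}$. I would prove it by bounding the convex function $e^{ty}$ on $[a,b]$ by its chord $\frac{b-y}{b-a}e^{ta} + \frac{y-a}{b-a}e^{tb}$, taking expectations and using $\mathbb{E}[Y]=0$ to obtain $\mathbb{E}[e^{tY}] \leq e^{\psi(t)}$ for an explicit function $\psi$, and then verifying $\psi(0)=\psi'(0)=0$ together with $\psi''(t) \leq (b-a)^2/4$, so that a second-order Taylor expansion yields $\psi(t) \leq t^2(b-a)^2/8$. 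Applying this with $Y_i = X_i - \mathbb{E}[X_i]$, which is centered and whose range has length at most $1$ since $X_i \in [0,1]$, gives $\mathbb{E}[e^{tY_i}] \leq e^{t^2/8}$ for each $i$, and hence $\mathbb{E}[e^{tS_z}] \leq e^{zt^2/8}$.

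Substituting back yields $\Pr[S_z \geq z\lambda] \leq e^{-tz\lambda + zt^2/8}$ for every $t>0$. The final step is to optimize the exponent over $t$: the quadratic $-t\lambda + t^2/8$ is minimized at $t = 4\lambda$, where it equals $-2\lambda^2$, giving $\Pr[S_z \geq z\lambda] \leq e^{-2z\lambda^2}$. Combining this with the symmetric lower-tail bound via the union bound produces the stated inequality $\Pr[|S_z/z| \geq \lambda] \leq 2e^{-2z\lambda^2}$. I expect the only genuinely delicate part to be the estimate $\psi''(t) \leq (b-a)^2/4$ inside Hoeffding's lemma, which follows by writing $\psi''(t)$ in the form $(b-a)^2\, q(1-q)$ for some $q \in [0,1]$ and observing that $q(1-q) \leq \tfrac{1}{4}$.
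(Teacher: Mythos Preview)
Your proof is correct: it is the standard Chernoff--Hoeffding argument, and every step (union bound for the two-sided inequality, exponential Markov, independence to factor the MGF, Hoeffding's lemma with the chord-convexity estimate and the $\psi'' \leq (b-a)^2/4$ bound, optimization at $t=4\lambda$) is right.

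The paper, however, does not prove this lemma at all. It is stated as a standalone fact --- the classical Hoeffding inequality, cited earlier in the paper via \cite{hoeffding1994probability} --- and then immediately applied to the random variables $X_i$ indicating whether the $i$-th pair of $\sqrt{c}$-walks meet. So there is nothing to compare: you have supplied a full self-contained proof where the paper simply invokes a textbook result by name.
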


We note that the probability that a pair of $\sqrt{c}$ walks from node $s$ and node $u$ meet is the SimRank value $s(s,u)$, and we let $X_i$ be the event that the $i$-th pair $\sqrt{c}$-walks from $s$ and $u$ meet. Hence, $\mathbb{E}[X_i]=s(s,u)$. By simple application of Hoeffding Inequality and the estimation $\tilde{s}(s,u)=\frac{\sum_{i=1}^z X_i}{z}$, we have

\begin{align}\label{equ:hoeff}
&{\bm\Pr}\left[\left|\tilde{s}(s,u)-{s}(s,u)\right|\ge \lambda\right]\nonumber\\
=&{\bm \Pr}\left[\left|\frac{\sum_{i=1}^z X_i}{z}-\frac{\sum_{i=1}^z E(X_i)}{z}\right|\ge \lambda\right]
\leq  2e^{-2z\lambda^2}
\end{align}

Let $z=\frac{\log 2n}{2\lambda^2}$, we have $e^{-2z\lambda^2}=\frac{1}{n}$. However, our algorithm does not involve the $\sqrt{c}$-walks that are longer than $\log_{\frac{1}{\sqrt{c}}} n$. To quantify the effect of truncation, we note that sampling a $\sqrt{c}$-decay walk is equivalent to sampling a length-$l$-walk with probability $(\sqrt{c})^{l}(1-\sqrt{c})$ for $l\ge 0$. By calculating the probability where a $\sqrt{c}$-walk is less than $\log_{\frac{1}{\sqrt{c}}} n$, we have: 
\begin{align}\label{equ:small percent}
{\bm\Pr}\left[l \leq \log_{\frac{1}{\sqrt{c}}} n\right]
&=\sum_{l=0}^{\log_{\frac{1}{\sqrt{c}}} n} (\sqrt{c})^{l}\cdot (1-\sqrt{c})>1 - \frac{1}{n}
\end{align}
The above equation demonstrates that the probability of a random walk longer than $\log_{\frac{1}{\sqrt{c}}} n$ is smaller than $\frac{1}{n}$. Then we set $N_l = \left\lceil \frac{\log{2n}}{2\lambda^2}\cdot (\sqrt{c})^{l} \cdot (1-\sqrt{c})\right\rceil$ for length-$l$ ($ 0\leq l \leq\log_{\frac{1}{\sqrt{c}}}n$) based on this probability result, and truncate the length at $\log_{\frac{1}{\sqrt{c}}} n$. Next we show that this simplification will impact less than $\frac{3}{n}$ absolute value of the SimRank value. 

We denote ${C}_{s,l_1,u,l_2}$ as the number of meets between $s$'s length-$l_1$ walks and $u$'s length-$l_2$ walks,
if we directly perform the random walks based on the expected numbers, i.e., $|Z_{s,l}|$. Let $\phi = \log_{{1}/{\sqrt{c}}} n$.
We can transform Equation \ref{equ:hoeff} as follows:
{\small
\begin{align}\label{equ:7}
&{\bm \Pr}\left[|\tilde{s}(s,u)-{s}(s,u)|< \lambda\right]\nonumber\\
=&{\bm \Pr}\left[ -\lambda<\frac{\sum_{l_1=0}^{+\infty} \sum_{l_2=0}^{+\infty} {C}_{s,l_1,u,l_2} }{z}-{s}(s,u)
< \lambda \right]\nonumber\\
=&{\bm \Pr}\left[-z\lambda<{\sum_{l_1=0}^{\phi} \sum_{l_2=0}^{\phi} {C}_{s,l_1,u,l_2} }\right.\nonumber
+\left.{\sum_{l_1=0}^{\phi} \sum_{l_2=\phi+1 }^{+\infty} {C}_{s,l_1,u,l_2} }+\right.\\
&\left.{\sum_{l_1=\phi+1}^{+\infty} \sum_{l_2=0}^{\phi} {C}_{s,l_1,u,l_2} }\right.\nonumber
+\left.{\sum_{l_1=\phi+1}^{+\infty} \sum_{l_2=\phi+1}^{+\infty} {C}_{s,l_1,u,l_2} }-s(s,u)\cdot z< z\lambda\right]\nonumber\\
\ge & 1-2e^{-2z\lambda^2}
\end{align}
}
By Equation \ref{equ:small percent}, there are fewer than $\frac{z}{n}$ random walks sourced at nodes $s$ (resp. $u$) whose lengths are longer than $\phi = \log_{{1}/{\sqrt{c}}} n$. Hence, there are at most $\frac{z}{n}$ meets of paired-walks when one walk is longer than $\phi = \log_{{1}/{\sqrt{c}}} n$. Formally, $$\sum_{l_1=0}^{\phi} \sum_{l_2=\phi+1 }^{+\infty} {C}_{s,l_1,u,l_2}\leq\frac{z}{n}$$
$$\sum_{l_1=\phi+1}^{+\infty} \sum_{l_2=0}^{\phi} {C}_{s,l_1,u,l_2}\leq\frac{z}{n}$$
$$\sum_{l_1=\phi+1}^{+\infty} \sum_{l_2=\phi+1}^{+\infty} {C}_{s,l_1,u,l_2} \leq\frac{z}{n}$$
Then we have


{\small
\begin{align}\label{equ:8}
    &0 < \frac{\sum_{l_1=1}^{\phi} \sum_{l_2=\phi+1 }^{+\infty} {C}_{s,l_1,u,l_2} }{z}+\frac{\sum_{l_1=\phi+1}^{+\infty} \sum_{l_2=1}^{\phi} {C}_{s,l_1,u,l_2} }{z}\nonumber \\
    &+ \frac{\sum_{l_1=\phi+1}^{+\infty} \sum_{l_2=\phi+1}^{+\infty} {C}_{s,l_1,u,l_2} }{z} < \frac{3}{n}.
\end{align}}

If we apply Equation \ref{equ:8} in Equation \ref{equ:7}, we have:

{\small
\begin{align}
    &{\bm \Pr}\left[|\tilde{s}(s,u)-{s}(s,u)|< \lambda\right]\nonumber\\
    \le & {\bm \Pr}\left[-\lambda-\frac{3}{n}<\frac{\sum_{l_1=0}^{\phi} \sum_{l_2=0}^{\phi} {C}_{s,l_1,u,l_2} }{z}-{s}(s,u)<\lambda\right]\nonumber\\
    \le& {\bm \Pr}\left[-\lambda-\frac{3}{n}<\frac{\sum_{l_1=0}^{\phi } \sum_{l_2=0}^{\phi} {C}_{s,l_1,u,l_2} }{z}-{s}(s,u)<\lambda+\frac{3}{n}\right]
\end{align}}
For the reason that we only measure the meetings happening at walk-lengths less than $\phi = {\log_{{1}/{\sqrt{c}}} n }$, we set $\tilde{s}(s,u) = \frac{\sum_{l_1=0}^{\phi} \sum_{l_2=0}^{\phi} {C}_{s,l_1,u,l_2} }{z}$ as our estimation result. Lastly we have the following accuracy guarantee:

\begin{align}
    &{\bm \Pr}\left[\left|\tilde{s}(s,u) - s(s,u)\right|<\lambda+\frac{3}{n}\right]\ge 1-2e^{-2z\lambda^2}
\end{align}
We simplify this result by setting $\lambda = \epsilon - \frac{3}{n}$, and we have:
\begin{align}\label{equ:failure_pro}
    &{\bm \Pr}\left[|\tilde{s}(s,u)-s(s,u)| \ge \epsilon\right] \le 2e^{-2z{(\epsilon-\frac{3}{n})}^2},
\end{align}
where $z = \frac{\log 2n}{2{\left(\epsilon-\frac{3}{n}\right)}^2}$ and $\epsilon \ge \frac{3}{n}$. Based on this result, we guarantee that by sampling $N_l = \left\lceil \frac{\log{2n}}{2\left(\epsilon-\frac{3}{n}\right)^2}\cdot (\sqrt{c})^{l}\cdot (1-\sqrt{c})\right\rceil$ length-$l$ ($ 0\leq l \leq\log_{\frac{1}{\sqrt{c}}}$) walks for each node, we can obtain the SimRank values $\tilde{s}(s,u)$ with at most $\epsilon$ error ($\epsilon \ge \frac{3}{n}$).

\begin{figure*}

\end{figure*}



\subsubsection{Rounding Up Error}
we consider that the actual number of random walks for each length $l$ is $\left\lceil \frac{\log{2n}}{2{\left(\epsilon-\frac{3}{n}\right)}^2}\cdot (\sqrt{c})^{l}\cdot (1-\sqrt{c})\right\rceil$ because $\frac{\log{2n}}{2{\left(\epsilon-\frac{3}{n}\right)}^2}\cdot (\sqrt{c})^{l}\cdot (1-\sqrt{c})$ may not be an integer. The rounding up of the samples for length-$l$ makes our SimRank estimation biased, which is denoted by a followed example.
We let $\hat{Z}_{s, l_1}$ denote the set of actually generated length-$l_1$ random walks
sourced at $s$, where $|\hat{Z}_{s, l_1}| =  \left\lceil \frac{\log{2n}}{2{\left(\epsilon-\frac{3}{n}\right)}^2}\cdot (\sqrt{c})^{l_1}\cdot (1-\sqrt{c})\right\rceil$. And we denote the expected set of length-$l_1$ random walks as ${Z}_{s, l_1}$ with $|{Z}_{s, l_1}| =   \frac{\log{2n}}{2{\left(\epsilon-\frac{3}{n}\right)}^2}\cdot (\sqrt{c})^{l_1}\cdot (1-\sqrt{c})$.
 When two random walks selected from $Z_{s, l_1}$ and $Z_{u, l_2}$ meet, rounding up $|{Z}_{s, l_1}|$ to $ |\hat{Z}_{s, l_1}|$  and $|{Z}_{u, l_2}|$ to $ |\hat{Z}_{u, l_2}|$ will change the distribution of walk length. As a result,
the meeting probability of $s$ and $u$ will be affected and degrade our accuracy. 
We give a result as follows:

\begin{lemma}\label{lem:meeting score}
Each time when two random walks selected from $\hat{Z}_{s, l_1}$ and $\hat{Z}_{u, l_2}$ meet, this meeting will only contribute $\frac{|{Z}_{s, l_1}|\cdot |{Z}_{u, l_2}|}{|\hat{Z}_{s, l_1}|\cdot |\hat{Z}_{u, l_2}|} \cdot \frac{1}{z}$ score to the SimRank value of $\tilde{s}(s,u)$.
\end{lemma}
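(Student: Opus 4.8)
The plan is to \emph{derive} the stated weight from the single requirement that the reweighted count of walk-meetings be an (essentially) unbiased estimator of $s(s,u)$, thereby reproducing the behaviour of the plain Monte Carlo estimator analysed in Section~\ref{sec:correctness}, and then to push the estimate through that section's concentration bound. So the first step I would carry out is to make the idealized picture explicit: in the plain Monte Carlo method one draws $z$ independent pairs of $\sqrt c$-decay walks rooted at $s$ and at $u$ and adds $\tfrac1z$ to $\tilde s(s,u)$ for every pair that meets; by Lemma~\ref{meeting prob} each pair meets with probability $s(s,u)$, so this is unbiased, and the Hoeffding bound of Section~\ref{sec:correctness} controls the deviation. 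A $\sqrt c$-decay walk rooted at $s$ has length exactly $l$ with probability $(\sqrt c)^l(1-\sqrt c)$ and, conditioned on its length, is a uniformly random length-$l$ walk from $s$; hence in expectation the idealized process uses $|Z_{s,l}|=z(\sqrt c)^l(1-\sqrt c)$ length-$l$ walks from $s$ (and $|Z_{u,l}|$ from $u$), each uniform among the length-$l$ walks from its source. One should picture the estimator as first fixing this multiset of walks and then pairing them uniformly at random, which is exactly what the shuffle of Section~\ref{subsec:Shuffle Random  Walks} realizes.

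The second step is to account for the rounding. Our algorithm cannot use $|Z_{s,l}|$ walks, since this need not be an integer, so it uses $|\hat Z_{s,l}|=\lceil|Z_{s,l}|\rceil$ of them. The key observation I would prove is that replacing the ``ideal'' pool of $|Z_{s,l}|$ length-$l$ walks from $s$ by the larger, all-integer pool of $|\hat Z_{s,l}|$ independently generated length-$l$ walks, and scaling down by the factor $|Z_{s,l}|/|\hat Z_{s,l}|$ every quantity attributed to a single walk of that pool, leaves all relevant expectations unchanged: a uniformly chosen walk of the enlarged pool is still a uniformly random length-$l$ walk from $s$, and the factor exactly compensates for the change of pool size (this is also the natural linear interpolation when $|Z_{s,l}|$ is fractional, matching the two-set description of $Z_{s,l}$ in Section~\ref{subsec:Calculate SimRank}). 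Applying the same reasoning independently on the $u$-side, the scale factor attached to a meeting between a length-$l_1$ walk of $\hat Z_{s,l_1}$ and a length-$l_2$ walk of $\hat Z_{u,l_2}$ is the product $\dfrac{|Z_{s,l_1}|}{|\hat Z_{s,l_1}|}\cdot\dfrac{|Z_{u,l_2}|}{|\hat Z_{u,l_2}|}$; multiplying by the per-meeting Monte Carlo weight $\tfrac1z$ gives precisely $\dfrac{|Z_{s,l_1}|\,|Z_{u,l_2}|}{|\hat Z_{s,l_1}|\,|\hat Z_{u,l_2}|}\cdot\tfrac1z$, as claimed.

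The third step is to certify this choice by a linearity-of-expectation computation: summing this contribution over all detected meetings — one per pair that meets, after the deduplication of Section~\ref{subsec:Calculate SimRank} for a pair that meets at several steps — should yield in expectation $\sum_{l_1,l_2\le\phi}(\sqrt c)^{l_1}(1-\sqrt c)(\sqrt c)^{l_2}(1-\sqrt c)\,q(l_1,l_2)$, where $q(l_1,l_2)$ is the meeting probability of a uniform length-$l_1$ walk from $s$ with a uniform length-$l_2$ walk from $u$ and $\phi=\log_{1/\sqrt c}n$. This is exactly the length-truncated version of $s(s,u)$ already treated in Section~\ref{sec:correctness}, so combining with that section's truncation bound reproduces the $\epsilon$ guarantee of Theorem~\ref{thm:error}.

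The main obstacle is that, unlike in textbook Monte Carlo, after shuffling the generated walks are \emph{shared} across many pairs, so the per-meeting contributions are not independent; I would need to verify that the expectations still factor as above and that a concentration bound still applies — using Hoeffding on the $z$ pair-indicators together with a bounded-differences argument to absorb the shuffle and the deduplication. A second, more technical point concerns the exact normalization constant: rounding $|Z_{s,l}|$ up to $|\hat Z_{s,l}|$ changes the total number of generated walks per node from $z$ to $\sum_{l\le\phi}\lceil|Z_{s,l}|\rceil$, so the pairing runs over slightly more than $z$ pairs, and I would argue this discrepancy is $O(\phi)=O(\log n)$ and is dominated by the slack already budgeted in the $\tfrac3n$ term (which, as the paper remarks, can be shrunk further if needed). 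Finally, one must check that the deduplication does not bias the count, which is immediate because the weight depends only on the two walk lengths $l_1,l_2$ and not on the meeting node.
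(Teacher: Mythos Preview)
Your proposal is correct and follows essentially the same route as the paper: both derive the weight by requiring that the empirical meeting rate $\hat C_{s,l_1,u,l_2}/(|\hat Z_{s,l_1}|\,|\hat Z_{u,l_2}|)$ coincide with the ideal rate $C_{s,l_1,u,l_2}/(|Z_{s,l_1}|\,|Z_{u,l_2}|)$, and then substitute into the Monte Carlo estimator $\tilde s(s,u)=\tfrac1z\sum_{l_1,l_2}C_{s,l_1,u,l_2}$ to read off the per-meeting contribution. Your third step and the discussion of concentration, shared walks, and deduplication go beyond what this lemma asserts---those concerns belong to the surrounding accuracy argument for Theorem~\ref{thm:error}, and the paper's own proof of the lemma is correspondingly just the two displayed equalities.
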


\begin{proof}
When we estimate the SimRank $\tilde{s}(s,u)$ values based on the changed sample numbers $|\hat{Z}_{s, l_1}|$ and $|\hat{Z}_{u, l_2}|$, we still need to guarantee the meeting probability is the same as using the expected sample numbers $|{Z}_{s, l_1}|$ and $|{Z}_{u, l_2}|$:
\begin{equation}
    \frac{\hat{C}_{s,l_1,u,l_2}}{|\hat{Z}_{s, l_1}|\cdot |\hat{Z}_{u, l_2}|} =  \frac{{C}_{s,l_1,u,l_2}}{|{Z}_{s, l_1}|\cdot |{Z}_{u, l_2}|}.
\end{equation}
Here $\hat{C}_{s,l_1,u,l_2}$ means the actual meeting numbers where $s$' length-$l_1$ walks meet with $u$' length-$l_2$ walks when we run the actual random walks. The difference between $\hat{C}_{s,l_1,u,l_2}$ and ${C}_{s,l_1,u,l_2}$ is that ${C}_{s,l_1,u,l_2}$ means the meeting number when running the theoretical random walks.
Then we derive the SimRank value $\tilde{s}(s,u)$ according to:
\begin{align}\label{equ:scale contribution}
   \tilde{s}(s,u) =& \frac{\sum_{l_1=1}^{\log_{\frac{1}{\sqrt{c}}} n } \sum_{l_2=1}^{\log_{\frac{1}{\sqrt{c}}} n} {C}_{s,l_1,u,l_2} }{z}\nonumber \\
   = & \frac{\sum_{l_1=1}^{\log_{\frac{1}{\sqrt{c}}} n } \sum_{l_2=1}^{\log_{\frac{1}{\sqrt{c}}} n} \hat{C}_{s,l_1,u,l_2} \cdot \frac{|{Z}_{s, l_1}|\cdot |{Z}_{u, l_2}|}{|\hat{Z}_{s, l_1}|\cdot |\hat{Z}_{u, l_2}|}}{z}
\end{align}
Here $\hat{C}_{s,l_1,u,l_2}$ is derived by counting the times where walks from the actual sample sets $\hat{Z}_{s, l_1}$ and $\hat{Z}_{u, l_2}$ meet. This result proves the claim.

\end{proof}
\vspace{-4mm}
 \subsection{Reducing Threshold $\frac{3}{n}$ and Failure Probability $\frac{1}{n}$}
 
 In Theorem \ref{thm:error}, we note that we can further reduce the error threshold (e.g., $\frac{3}{n}$) and the failure probability (e.g. $\frac{1}{n}$) by increasing the length of
walks and enlarging the number of samples, respectively. 

\noindent
{\bf Reduce the error threshold.}
 Assuming we extend the maximum length to $p\cdot\log_{\frac{1}{\sqrt{c}}} n$, where $p$ is a constant and $p\ge 1$, then we can transform Equation \ref{equ:small percent} as following:
 \begin{align}
{\bm\Pr}\left[l \leq p\cdot\log_{\frac{1}{\sqrt{c}}} n\right]
&=\sum_{l=0}^{p\cdot\log_{\frac{1}{\sqrt{c}}} n} (\sqrt{c})^{l}\cdot (1-\sqrt{c})\\
&=1-\sqrt{c}^{p\cdot\log_{\frac{1}{\sqrt{c}}} n+1}
>1 - \frac{1}{n^p}
 \end{align}
As a result, the probability of a random walk longer than $p\cdot \log_{\frac{1}{\sqrt{c}}} n$ is smaller than $\frac{1}{n^p}$. Moreover, the upper bound in Equation \ref{equ:8} can be $\frac{3}{n^p}$ and our error in Theorem \ref{thm:error} becomes $\epsilon$ ($\epsilon \ge \frac{3}{n^p}$, $p\ge 1$).
 
 \vspace{1mm}
 \noindent
{\bf Reduce the failure probability.}
 The failure probability is derived from Equation \ref{equ:failure_pro}. We can simply reduce the failure probability by sampling more random walks. In particular, with the error threshold $\frac{3}{n^p}$, by setting $z = {(\log 2n^q)}/{2{\left(\epsilon-\frac{3}{n^p}\right)}^2}$, where $q$ is a constant and $q\ge 1$.  We can transform Equation \ref{equ:failure_pro} as follows:
 \begin{align}
    &{\bm \Pr}\left[|\tilde{s}(s,u)-s(s,u)| \ge \epsilon\right] \le 2e^{-2z{(\epsilon-\frac{3}{n^p})}^2}=\frac{1}{n^q}
\end{align}

In summary, by setting the maximum length to $p\cdot \log_{\frac{1}{\sqrt{c}}} n$ and sampling $z = {(\log 2n^q)}/{2{\left(\epsilon-\frac{3}{n^p}\right)}^2}$, we guarantee $\epsilon$ ($\epsilon \ge \frac{3}{n^p}$) error with $1-\frac{1}{n^q}$ probability, where $p,q\ge 1$. Clearly, this transformation only increase the number of random walks and walk length by a constant factor, which
does not impact the round complexity and space complexity. For example, increasing the number of samples to ${(\log 2n^2)}/{{2\left(\epsilon-\frac{3}{n^2}\right)}^2}$ and extending the maximum length to $2\log_{\frac{1}{\sqrt{c}}} n$, we can guarantee $\epsilon$ ($\epsilon \ge \frac{3}{n^2}$) error with $1-\frac{1}{n^2}$ probability.


   



\end{document}